\def\withcolors{1}
\def\withnotes{1}
\def\BibTeX{{\rm B\kern-.05em{\sc i\kern-.025em b}\kern-.08em
    T\kern-.1667em\lower.7ex\hbox{E}\kern-.125emX}}
\theoremstyle{plain} \newtheorem{thm}{Theorem}[section]
\newtheorem{lem}[thm]{Lemma} 
\newtheorem{cor}[thm]{Corollary}
\theoremstyle{definition} \newtheorem{defn}[thm]{Definition}
\theoremstyle{remark} \newtheorem{rem}{Remark}
\definecolor{lightgray}{gray}{0.9}
\newcommand{\eqdef}{:=}
\newcommand{\norm}[1]{\|#1\|}%
\newcommand\numberthis{\addtocounter{equation}{1}\tag{\theequation}}
 \newcommand{\R}{\mathbb{R}}
\newcommand{\N}{\mathbb{N}}
\newcommand{\bPr}[1]{\mathbb{P}\left(#1\right)}
\newcommand{\E}[1]{\mathbb{E}\left[#1\right]}
\newcommand{\C}{\mathcal{C}}
\newcommand{\X}{\mathcal{X}}
\newcommand{\bX}{\mathbb{X}}
\newcommand{\U}{\mathcal{U}}
\newcommand{\uRoman}[1]{\uppercase\expandafter{\romannumeral#1}}
  \newcommand{\mcolor}[1]{{\color{ForestGreen}#1}} 
  \newcommand{\tcolor}[1]{{\color{Orange}#1}} 
  \newcommand{\mcolor}[1]{{#1}}
  \newcommand{\tcolor}[1]{{#1}}
  \newcommand{\mnote}[1]{\par\mcolor{\textbf{M: }\sf #1}} 
  \newcommand{\tnote}[1]{\par\tcolor{\textbf{T: }\sf #1}} 
  \newcommand{\mnote}[1]{}
  \newcommand{\tnote}[1]{}
\newcommand{\ignore}[1]{\leavevmode\unskip} 
\newcommand{\indic}[1]{\mathbbm{1}_{#1}}
\providecommand*\email[1]{\href{mailto:#1}{#1}}
\newcommand{\cE}{\mathcal{E}}
\newcommand{\cO}{\mathcal{O}}
\newcommand{\cN}{\mathcal{N}}
\newcommand{\mutualinfo}[2]{ I\mleft(#1 \land #2\mright) }
\newcommand{\SNR}{\mathtt{SNR}}
\definecolor{red1}{rgb}{0.4,0,0}
\definecolor{red1}{rgb}{0.4,0,0}
\newcommand\blfootnote[1]{%
  \begingroup
  \renewcommand\thefootnote{}\footnote{#1}%
  \addtocounter{footnote}{-1}%
  \endgroup
}
\begin{document}

\title{Fundamental limits of  over-the-air optimization: Are analog schemes optimal?}
\author{Shubham K Jha \thanks{Robert Bosch Center for Cyber-Physical Systems, Indian Institute of Science, Bangalore, India. } \and 
     Prathamesh Mayekar\textsuperscript{$\dag$} \and Himanshu Tyagi\textsuperscript{$\ast$} \thanks{Department of Electrical Communication Engineering,
Indian Institute of Science, Bangalore, India. Email: \email{
{\{shubhamkj,  prathamesh, htyagi\}@iisc.ac.in }}}}
\maketitle 

\begin{abstract}
We consider over-the-air convex optimization
on a $d-$dimensional space
where coded gradients are sent over an additive Gaussian noise channel
with variance $\sigma^2$.
The codewords satisfy an average power constraint $P$,  resulting in the signal-to-noise ratio (SNR)
of $P/\sigma^2$. 
We derive bounds for the convergence rates for over-the-air optimization.
Our first result is a lower bound
for the convergence rate showing that any code must {slowdown}
the convergence rate
by a factor of roughly $\sqrt{d/\log(1+\SNR)}$.
Next, we consider  a popular class of schemes called \emph{analog coding}, where a linear function of the gradient is sent.  
We show that a simple scaled transmission analog coding scheme results in a slowdown in convergence rate by
a factor of $\sqrt{d(1+1/\SNR)}$. This matches the previous lower
bound up to constant factors for low SNR, making the scaled transmission scheme optimal
at low SNR. However, we show that this slowdown is necessary for any analog coding scheme.
In particular, a slowdown in convergence by a factor of $\sqrt{d}$ for analog coding remains even
when SNR tends to infinity.
Remarkably, we present a simple quantize-and-modulate scheme that uses \emph{Amplitude Shift Keying}
and almost attains the optimal convergence rate at all SNRs.
\end{abstract}

\blfootnote{{An abridged version of this paper will appear in the proceedings of  IEEE Global Communications Conference (GLOBECOM), 2021.}}
\date{}
\newpage
\tableofcontents
\newpage

\section{Introduction}
Distributed optimization is a classic topic with decades of work
building basic theory.
The last decade has seen increased interest in this topic
motivated by distributed and large scale machine learning. For instance,
parallel implementation of training algorithms for
deep learning models 
over multi-GPU has become commonplace.
In another direction, over the past 5 years
or so,
federated learning applications that require
building machine learning models for data distributed across multiple
users have motivated optimization algorithms that limit communication
from the users to a parameter server ($cf.$~\cite{konevcny2016federated}).
Most recently, there has been a lot of interest in the scenario
where this communication is {\em over-the-air}, namely the users
are connected over a wireless communication channel
($cf.$~\cite{Amiria19, Amiri19}).

Many different
optimization algorithms
have been proposed
using different kinds of codes. However, there is no work addressing
information-theoretic limits on the performance of these algorithms.
In particular, it remains unclear whether simple analog schemes for
communication
over AWGN channel are optimal in any setting and whether there is any
fundamental limitation to their performance. More broadly, do we still
need sophisticated error-correcting codes to attain the optimal
convergence rate for the optimization problem? In this work, we
address these questions for convex optimization problems. 

We establish an information-theoretic lower bound on the convergence rate for any scheme for
convex stochastic optimization, which shows that, for $d-$dimensional domain,
there is a $\displaystyle{\sqrt{\frac{d}{\log (1+\SNR)}}}$ factor slowdown in convergence rate.
Furthermore, for low $\SNR$, analog codes with stochastic gradient descent (SGD)
attain this optimal rate. Next, we establish a general lower bound on
the performance of analog codes and show that
there is a factor $\sqrt{d(1+\frac{1}{\SNR})}$ slowdown in convergence rate when
analog codes are used. Note that as $\SNR$ goes to infinity
one can expect that the convergence rate should tend to the classic
one. But our bound shows that for analog codes 
there is at least a factor $\sqrt{d}$ slowdown
 even as the $\SNR$ tends to infinity, making them suboptimal at high $\SNR$. 
 Finally, we show that a simple
quantize-and-modulate SGD
scheme that uses a vector quantizer for the gradients and sends the
quantized values using amplitude shift keying (ASK) is almost rate
optimal.

There has been a very interesting line of work on these topics,
including \cite{Amiria19, Amiri19, Amirib20, Amiric19,  Abad20,  Chang20,Sery19,Sery20,Kobi20,Yang20, Zhu20, Zhang21, Zhu21, Wang18, Chen21, Sun20, saha2021decentralized}.
Most works have considered the multiparty setting,
with more complicated channels than AWGN.
In this paper, for simplicity, we restrict to the two-terminal
setting. But our qualitative results apply to the multiparty setting
as well. 

Broadly, the gradient coding schemes proposed in these works  can be
divided into two categories: { analog} and { digital}. In more
detail, in analog schemes, the coded gradients sent over the noisy channel are a linear transformation of the subgradient supplied by the oracle.  Typical analog schemes include scaling,  sparsification, or direct transmission of gradients over a wireless channel.  For instance,  authors in \cite{Amiria19} send only top $k$ gradient coordinates along with error feedback. In \cite{Kobi20}, the subgradient estimates are scaled-down appropriately to satisfy the power constraint. Each coordinate is then transmitted over the Gaussian channel using one channel use per transmission.  Similar scaling approaches are also presented in \cite{Sery20,  Sun20, Yang20, Zhang21, Zhu21}.
 On the other hand,  digital schemes rely on gradient quantization and channel coding. For instance, authors in \cite{Chang20} propose to quantize the subgradients using stochastic quantization, and the precision is chosen so that the transmission rate is the same as channel capacity. Then they are transmitted using any capacity-achieving code. 
 In \cite{Zhu21}, authors perform one-bit quantization of subgradients similar to signSGD \cite{bernstein18a} and send them over-the-air using OFDM modulation, taking into account the frequency selective-fading and inter-symbol interference.

 In summary, most of the prior work
   either uses analog schemes or capacity-achieving channel codes.
   Further, even works such as~\cite{Zhu21} which use a quantize-and-modulate approach like our work,
do not comment on the optimality of the rate of convergence. In fact, 
in our proposed scheme, we use a one-dimensional signal constellation and let the number of bits 
   used for quantization grow roughly as $\log(1+\SNR)$ to get optimal convergence rate.

 In a slightly different direction, the variant of distributed optimization with compressed subgradient estimates has also been studied extensively, primarily to mitigate the slowdown in convergence of distributed optimization procedures when full gradients are communicated (see, for instance,   \cite{alistarh2017qsgd, gandikota2019vqsgd, basu2019qsparse, faghri2020adaptive, seide20141, wang2018atomo, wen2017terngrad, acharya2019distributed,    mayekar2020ratq, lin2020achieving, lin2020achieving, mayekar2020limits, suresh2017distributed, chen2020breaking, huang2019optimal, mayekar2021wyner, jhunjhunwala2021adaptive, ghosh2020distributed}).

 We build on the quantizers proposed in these works to obtain a nearly optimal convergence rate algorithm.

 For our lower bounds, we follow a similar strategy as~\cite{ACMT21informationconstrained}
(which in turn builds on~\cite{ACT:18,agarwal2012information, acharya2020general, ach2020disc})
 where optimization under communication constraints  (not over-the-air) was considered. While
 the difficult oracles of these prior works yield our general lower bound, for deriving the limitation for analog schemes, we consider a new class of Gaussian oracles; see Section~\ref{s:proof} for more details.

The rest of the paper is organized as follows. We set up the problem in
the next section and provide all our main results in
Sections~\ref{s:results}. All the proofs are given in
Section~\ref{s:proof}
and concluding remarks are in Section~\ref{s:conclusion}.


\section{Problem formulation and preliminaries}
\subsection{Functions and gradient oracles}
For a convex set $\X\subset \R^d$ with $\sup_{x,y\in \X}\|x-y\|\leq D$,
we consider the minimization of an unknown convex function $f:\X\to\R$
using access to a first order {\em oracle} $O$ that reveals
noisy subgradient estimates for any queried point. 
We assume that the oracle outputs $\hat{g}(x)$
when a point $x\in \X$ is queried satisfy
the following conditions:
\begin{align}
 \E {\hat{g}(x)|x} &\in \partial f(x), \quad\text{(unbiasedness)}
\label{e:unbiasedness}
\\
\E{\|\hat{g}(x)\|^2|x} &\leq B^2, \quad \text{(mean square bounded oracle)}
\label{e:mean_square}
\end{align}
where $\partial f(x)\subset \R^d$ denotes the set of subgradients of $f$ at input
$x$. Denote by $\cO$ the set of pairs $(f,O)$
of functions and oracles satisfying the conditions above.

\subsection{Codes and Gaussian channel}\label{s:probF}
In our setting, the gradient estimates are not directly available to
the optimization algorithm $\pi$ but must be coded for error
correction, sent over a noisy channel, and decoded
 to be used by $\pi$. We consider
fixed length codes of length $\ell$ with average power less than
$P$. Specifically, we consider $(d, \ell, P)$-codes consisting
of encoder mappings $\varphi: \R^d \times \mathcal{U} \to \R^\ell$
such that
the codeword $\varphi(\hat{g},U)\in \R^\ell$ used to send the
subgradient estimate 
$\hat{g}\in \R^d$
satisfies the average power constraint
\begin{align}\label{e:Power_constraint}
\E{\|\varphi(\hat{g},U)\|^2}\leq \ell P,
\end{align}
where $U\in \U$ denotes the public randomness used to randomize the encoder
and is assumed to be available to both $\varphi$ and optimization algorithm $\pi$. For convenience, we drop the argument $U$ from the notation of $\varphi$ for the rest of the paper.
Denote by $\C_{\ell}$ the set of all $(d, \ell, P)$-codes.

After the $t$th query by the algorithm, when the oracle supplies a subgradient estimate $\hat{g}_t,$ the codeword $C_t=\varphi(\hat{g}_t)$ is sent over an {\em additive Gaussian
  noise}
channel. That is, after the $t$th query to the oracle, the algorithm $\pi$ observes $Y_t \in \R^\ell$ 
given by
\begin{align}\label{e:Gaussian_channel}
Y_t(i)=C_t(i)+Z_t(i), \qquad
1\leq i \leq \ell,
\end{align} where $\{Z_t(i)\}_{i\in [\ell], t \in \N}$ is a sequence of i.i.d. random variables
with common distribution $\mathcal(0, \sigma^2)$ -- the Gaussian distribution with mean $0$
and variance $\sigma^2$. We denote the {\em signal-to-noise ratio} by
$\displaystyle{
\SNR := \frac{P}{\sigma^2}.
}$

\subsection{Over-the-air Optimization}
We now describe an optimization algorithm $\pi$ using $(d,
\ell,P)$-code $\varphi$.
In any iteration $t$,  the optimization algorithm $\pi$,
upon observing the previous channel outputs $Y_1, ...,
Y_{t-1}\in\R^\ell$, queries the oracle
with point\footnote{We assume that the downlink
communication  channel from the algorithm to the oracle is
noiseless.} $x_t$. The oracle
gives $\hat{g}_t\in \partial f(x_t)$, encodes it as $\varphi(\hat{g})$
and sends it over the Gaussian channel.
The algorithm $\pi$ observes the output $Y_t\in \R^\ell$ of the
channel and moves to iteration $t+1$.

After $T$ iterations, the algorithm outputs $x_T$. Denote by $\Pi_{\ell, T}$ the class of all algorithms using a $(d, \ell, P)$-code and making $T$ oracle queries.

We abbreviate the
overall algorithm $\pi$ with access to oracle $O$ and using encoder
$\varphi$ by $\pi^{\varphi O}$. We call the tuple $(\pi, \varphi)$ consisting of the optimization algorithm and the encoding procedure $\varphi$ as an \emph{over-the-air optimization protocol.}
The convergence error of this over-the-air optimization protocol is given by
\[\cE(f, \pi^{\varphi O}):=\E {f(x_T)}-\min_{x\in \X}f(x).\]

We want to study how the convergence error goes to zero as a function of the total number of
channel-uses $N=T\ell$. We are allowed to use codes with any length
$\ell$ but note that an increase in the length of encoding protocol will lead to a decrease in the number of oracle queries as the number of channel-uses is restricted to $N$. Similarly, while we are allowed to use an optimization algorithm that can make as many as $N$ queries to the oracle, increasing the number of queries will lead to a smaller block length encoding protocol. Let
$\Lambda(N):=\{\pi \in \Pi_{\ell, T}, \varphi \in \C_{\ell}\colon \ell \cdot T \leq N\}$.
That is, $\Lambda(N)$ is the set of all over-the-air optimization protocols using $N$ channel transmissions.
Then, the smallest worst-case convergence error possible by using $N$
channel transmissions is given by
$\cE^\ast(N, \X):= \inf_{(\pi,\varphi) \in \Lambda(N)}\sup_{(f,O)\in \cO}
\cE(f, \pi^{\varphi O})$.
Let $\bX :=\{\X\colon \sup_{x,y \in \X}\norm{x-y}\leq D\}.$
In this paper, we will characterize the following quantity\footnote{Our goal behind considering
the min-max cost in
  \eqref{e:minmax} is to ensure that the lower bounds are independent of the geometry of set $\X$. But our upper bound techniques can handle an arbitrary, fixed $\X$ as well.}:
\begin{align}\label{e:minmax}
\cE^\ast(N):=\sup_{\X \in \bX }\cE^\ast(N, \X).
\end{align}

\subsection{Special coding schemes}
In addition to the general coding scheme above, we are interested in
the following
two special classes of simple coding schemes: Analog codes and ASK codes.

\begin{defn}\label{d:analog_schemes}
A code is an \emph{analog code} if the encoder mapping $\varphi$ is
  linear, i.e., when $\varphi(x)=\mathbf{A}x$ for an $\ell \times d$ matrix $\mathbf{A}$, for any $\ell \leq d$.
 We allow for random matrices $\mathbf{A}$ as long as they are independent of the observed gradient estimates. Also, we denote by $\cE^\ast_{analog}(N)$ the min-max optimization error when the class of $(d, \ell, P)$-encoding protocol is restricted to analog schemes (with everything else remaining the same as in \eqref{e:minmax}). Clearly,  $\cE^\ast_{analog}(N) \geq \cE^\ast(N).$ 
\end{defn}

\begin{defn}\label{d:ASK_codes}
  A code is an\footnote{For simplicity, we have considered AWGN channel for transmission.
    In many practical communication systems, a two-dimensional signal space is available through the in-phase
    and quadrature-phase components. For these systems, our results for ASK code
continue to hold with a QAM or QPSK constellation-based code.} \emph{Amplitude Shift Keying (ASK)
  code} satisfying the average power constraint \eqref{e:Power_constraint} if the range
of the encoder mapping is given by
\[
\left\{-\sqrt{P} +\frac{(k-1)\cdot 2\sqrt{P}}{2^r-1}\colon k \in [2^r] \right\},
\]
for some $r\in \N$.
Namely, the encoder first quantizes $\hat{g}$ to $r$ bits and then uses ASK
modulation for sending the quantized subgradient estimate. Note that
this is a code of length $1$.
\end{defn}

\subsection{A benchmark from prior results}
We recall results for the case $\SNR=\infty$, namely
the classic case when gradients estimates supplied by the oracle
are directly available to $\pi$, since perfect decoding is possible for every channel-use. We denote the min-max error in this case by $\mathcal{E}^{\ast}_{classic}{(N)}$. In
this standard setup for first-order convex optimization, prior work
gives a complete characterization of the min-max error
$\mathcal{E}^{\ast}_{classic}{(N)}$; see,
for instance, \cite{ nemirovski1995information}.
We summarize these well-known results below.
\begin{thm}
For absolute constants $c_1 \geq c_0 >0$, we have
\[
\displaystyle{  \frac{c_0DB}{ \sqrt{N} } \leq \mathcal{E}^{\ast}_{classic}{(N)}
\leq \frac{c_1DB}{ \sqrt{N} } }
.
\]
\end{thm}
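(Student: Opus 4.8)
The plan is to prove the two inequalities separately; both are classical (\cite{nemirovski1995information}), and I sketch the standard arguments adapted to the oracle class $\cO$ defined by \eqref{e:unbiasedness}--\eqref{e:mean_square}.

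\textbf{Upper bound.} I would exhibit a single algorithm --- projected stochastic subgradient descent with iterate averaging --- and bound its worst-case error over $\cO$. Fix $\X\in\bX$ of diameter at most $D$, pick $x_1\in\X$, and run $x_{t+1}=\Pi_\X(x_t-\eta\hat g_t)$ for $t=1,\dots,N$, where $\Pi_\X$ is Euclidean projection onto $\X$ and $\hat g_t$ is the oracle output at $x_t$. Nonexpansiveness of $\Pi_\X$ gives the one-step bound $\|x_{t+1}-x^\ast\|^2\le\|x_t-x^\ast\|^2-2\eta\langle\hat g_t,x_t-x^\ast\rangle+\eta^2\|\hat g_t\|^2$. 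Taking conditional expectation given $x_t$, using unbiasedness \eqref{e:unbiasedness} with convexity to get $\E{\langle\hat g_t,x_t-x^\ast\rangle\mid x_t}\ge f(x_t)-f(x^\ast)$ and the second-moment bound \eqref{e:mean_square} to get $\E{\|\hat g_t\|^2\mid x_t}\le B^2$, I telescope over $t$ and divide by $2\eta N$ to obtain $\frac1N\sum_{t=1}^N\big(\E{f(x_t)}-f(x^\ast)\big)\le\frac{D^2}{2\eta N}+\frac{\eta B^2}{2}$. Jensen's inequality for the average iterate $\bar x_N=\frac1N\sum_{t=1}^N x_t$, followed by the choice $\eta=D/(B\sqrt N)$, yields a convergence error at most $DB/\sqrt N$, which is the claimed upper bound with an absolute constant $c_1$. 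Note this bound is dimension-free, so the matching lower bound should already appear in dimension one.

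\textbf{Lower bound.} I would use a two-point (Le Cam) construction in $d=1$ with $\X=[-D/2,D/2]$. For $\theta\in\{-1,+1\}$ and a parameter $c>0$ to be tuned, let $f_\theta(x)=c\,\theta\, x$: this is convex, has minimizer $-\theta D/2$, and for any output $\hat x$ satisfies $\big(f_{+}(\hat x)-\min f_{+}\big)+\big(f_{-}(\hat x)-\min f_{-}\big)=cD$, so the two instances cannot both be solved well. For the oracle, assuming $c\le B$, let each query (independently) return $\hat g=B$ with probability $\tfrac12(1+c\theta/B)$ and $\hat g=-B$ otherwise; then $\E{\hat g\mid\theta}=c\theta\in\partial f_\theta$ and $\hat g^2=B^2$ always, so $(f_\theta,O_\theta)\in\cO$. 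Since the $\SNR=\infty$ regime simply hands the algorithm $N$ i.i.d.\ such samples, distinguishing $\theta=+1$ from $\theta=-1$ is a hypothesis test whose per-sample KL divergence is $O((c/B)^2)$; tensorization and Pinsker bound the total variation between the two laws of the algorithm's output by $O(\sqrt N\,c/B)$. Choosing $c=c_0 B/\sqrt N$ with $c_0$ a small absolute constant makes this total variation at most $1/2$, and the standard reduction --- averaging the two convergence errors over $\theta$ and lower bounding each by a constant times the probability of outputting on the wrong side of $0$ --- gives worst-case error $\gtrsim cD=c_0 DB/\sqrt N$. Taking the supremum over $\X\in\bX$ (by embedding this one-dimensional instance) completes the lower bound.

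\textbf{Main obstacle.} The upper bound is essentially bookkeeping. The delicate step is the lower bound: the oracle must satisfy \emph{both} \eqref{e:unbiasedness} and \eqref{e:mean_square} at the scale $c\asymp B/\sqrt N$ (the $\pm B$ Bernoulli oracle is what makes this clean), and the information-theoretic indistinguishability argument must be pushed through with explicit constants so that the resulting bound genuinely matches the SGD guarantee up to an absolute factor. A secondary subtlety is checking that using ``diameter $\le D$'' rather than ``radius'' only affects constants, so that one may indeed take $c_1\ge c_0$ absolute.
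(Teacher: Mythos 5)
The paper does not actually prove this statement: it is presented as a known benchmark, with the characterization of $\mathcal{E}^{\ast}_{classic}(N)$ attributed to prior work (\cite{nemirovski1995information}), so there is no in-paper argument to compare against. Your proposal is a correct, self-contained proof of exactly the kind that underlies the cited result. The upper bound via projected SGD with averaging, the telescoped one-step inequality, and the choice $\eta=D/(B\sqrt N)$ are the standard route and use only \eqref{e:unbiasedness}--\eqref{e:mean_square}; the two-point Le Cam construction with the linear functions $f_\theta(x)=c\theta x$ on $[-D/2,D/2]$ and the $\pm B$ Bernoulli oracle at bias $c/B$ is likewise the canonical lower-bound instance, and your identity $\bigl(f_{+}(\hat x)-\min f_{+}\bigr)+\bigl(f_{-}(\hat x)-\min f_{-}\bigr)=cD$ together with Pinsker gives the claimed $c_0DB/\sqrt N$. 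One point worth making explicit (you gesture at it with ``simply hands the algorithm $N$ i.i.d.\ samples''): because $f_\theta$ is linear, the oracle's output distribution does not depend on the query point, so the $N$ responses are i.i.d.\ Bernoulli \emph{regardless of how adaptively the algorithm chooses its queries}; this is what licenses the KL tensorization against an arbitrary algorithm and is the reason the two-point argument closes without further work. With that observation spelled out, the proof is complete and consistent with the paper's stated benchmark; it is also the one-dimensional seed of the $d$-dimensional construction \eqref{eq:def:gv:convex:p2} that the paper does prove lower bounds with later.
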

Thus, the $1/\sqrt{N}$ convergence rate that  SGD provides for convex functions is optimal up to
constant factors, with dependence on the dimension $d$ coming only through the parameters $D$
and $B$. This convergence rate will serve as a basic benchmark for our results in this paper.

\section{Main Results}\label{s:results}
\subsection{Lower Bound for over-the-air optimization}
We begin by proving a lower bound for over-the-air optimization. The proof of the lower bound uses recent results in  information-constrained optimization given in \cite{ACMT21informationconstrained}, which in turn builds on the results of \cite{agarwal2012information, acharya2020general}.
As is usual in other lower bounds in stochastic optimization,
our lower bound holds for a sufficiently large $N$. 
\begin{thm}\label{t:LB}
 For some universal constant\footnote{The universal constants differ in different theorem statements.} $c \in (0, 1)$ and  $N \geq \frac{d}{\log(1+\SNR)}$, we have\footnote{$\log(\cdot)$ and $\ln(\cdot)$ denote logarithms to the base $2$ and  base $e$, respectively.}
 \[
 \cE^\ast(N) \geq \frac{cDB}{\sqrt{N}}\cdot \sqrt{\frac{d}{\min\{d, 1/2\log(1+\SNR)\}}}.
 \]
\end{thm}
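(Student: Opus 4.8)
The plan is to deduce this from the lower bound for \emph{information-constrained} stochastic optimization established in \cite{ACMT21informationconstrained} (which builds on \cite{agarwal2012information, acharya2020general}), by exhibiting any over-the-air protocol as a special case of an information-constrained one. Fix an arbitrary admissible protocol $(\pi,\varphi)\in\Lambda(N)$ that uses block length $\ell$ and makes $T$ queries, so $\ell T\le N$, and run it against the difficult function/oracle family of \cite{agarwal2012information, acharya2020general}: convex functions $f_v$ indexed by $v\in\{\pm1\}^d$ on an $\ell_\infty$-type ball of diameter $D$, together with a mean-square bounded oracle (satisfying \eqref{e:mean_square}) whose response $\hat g$ reveals the coordinates of $v$ slowly. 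The only access $\pi$ has to $\hat g_t$ is through the observation $Y_t=\varphi(\hat g_t,U)+Z_t$, where $Z_t\sim\mathcal{N}(0,\sigma^2 I_\ell)$ is independent of the past and $U$ is the shared public randomness; since $\varphi$ does not depend on $v$, the chain $v\to\hat g_t\to Y_t$ is Markov given the past, which is exactly the information-constrained setting with an adaptive algorithm and shared randomness, both handled in \cite{ACMT21informationconstrained}.

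The next step is to bound the per-round information budget of the channel $\hat g\mapsto Y$. Writing $C=\varphi(\hat g,U)$, the data-processing inequality, the independence of the noise coordinates, the Gaussian maximum-entropy bound applied coordinatewise, and finally Jensen's inequality applied with the power constraint \eqref{e:Power_constraint} (which holds for the input distribution induced by the difficult oracle) give
\begin{align*}
\mutualinfo{\hat g}{Y}\ &\le\ \sum_{i=1}^{\ell}\mutualinfo{C(i)}{Y(i)}\ \le\ \sum_{i=1}^{\ell}\tfrac12\log\!\Bigl(1+\tfrac{\E{C(i)^2}}{\sigma^2}\Bigr)\\
&\le\ \tfrac{\ell}{2}\log\!\Bigl(1+\tfrac{1}{\ell\sigma^2}\sum_{i=1}^{\ell}\E{C(i)^2}\Bigr)\ \le\ \tfrac{\ell}{2}\log(1+\SNR).
\end{align*}
Hence $\hat g\mapsto Y$ has information budget at most $\mathcal{I}_\ell:=\tfrac{\ell}{2}\log(1+\SNR)$, and, by the chain rule, the whole transcript $(Y_1,\dots,Y_T,U)$ carries at most $T\mathcal{I}_\ell\le\tfrac12 N\log(1+\SNR)$ bits about $v$.

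I would then invoke the information-constrained lower bound of \cite{ACMT21informationconstrained}: for a universal $c>0$, provided the number of queries $T$ exceeds a constant multiple of $d/\mathcal{I}_\ell$, any algorithm observing $T$ gradient estimates through a per-round channel of budget $\mathcal{I}_\ell$ incurs convergence error at least $cDB\sqrt{d/\bigl(T\min\{d,\mathcal{I}_\ell\}\bigr)}$. Substituting $\mathcal{I}_\ell=\tfrac{\ell}{2}\log(1+\SNR)$ and $\ell T\le N$, and splitting into the cases $\tfrac{\ell}{2}\log(1+\SNR)\le d$ and $\tfrac{\ell}{2}\log(1+\SNR)>d$ (using $\ell\ge1$ in the first to replace $\tfrac{\ell}{2}\log(1+\SNR)$ by $\tfrac12\log(1+\SNR)$ in the denominator, and the case hypothesis $\ell>2d/\log(1+\SNR)$ together with $T\ell\le N$ in the second), one checks that in every case the bound is at least $\tfrac{cDB}{\sqrt N}\sqrt{d/\min\{d,\tfrac12\log(1+\SNR)\}}$. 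The ``$T$ large enough'' requirement becomes $N\ge\ell T\ge\Omega\bigl(d/\log(1+\SNR)\bigr)$ uniformly in $\ell$, which is exactly the stated hypothesis on $N$. Taking the infimum over $(\pi,\varphi)\in\Lambda(N)$ then yields the claimed bound on $\cE^\ast(N)$ via \eqref{e:minmax}.

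The step I expect to be the main obstacle is making this reduction genuinely uniform: the channel $\hat g\mapsto Y$ is not fixed in advance but is whatever encoder the designer picks (including random, possibly nonlinear ones), the algorithm is fully adaptive, and encoder and algorithm share the randomness $U$, so some care is needed to be sure the information inequality above and the lower bound of \cite{ACMT21informationconstrained} still apply — here the power constraint \eqref{e:Power_constraint} is precisely what keeps $\mathcal{I}_\ell$ bounded, while conditioning on $U$ and the past absorbs adaptivity and shared randomness. The second delicate point is the quantification over block lengths: the bound must hold for the infimum over all $(\ell,T)$ with $\ell T\le N$, so the two-case computation above, and the verification that the ``$T$ sufficiently large'' hypothesis survives for every admissible $\ell$ given only $N\ge d/\log(1+\SNR)$, is where the real work lies; the entropy inequality and the invocation of \cite{ACMT21informationconstrained} are then routine.
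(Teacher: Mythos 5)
Your proposal is correct and follows essentially the same route as the paper: both reduce to the information-constrained optimization framework of \cite{ACMT21informationconstrained} with the standard difficult functions $f_v$ and a product-Bernoulli oracle, and both cap the per-round information leakage by the AWGN capacity $\tfrac{\ell}{2}\log(1+\SNR)$ via data processing and the power constraint. The only cosmetic difference is that you invoke the cited lower bound as a packaged statement in terms of an information budget (with the block-length case analysis done at the end), whereas the paper works through the intermediate mutual-information inequality and the strong data processing step explicitly and then optimizes the perturbation parameter $\delta$.
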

Our lower bound states that there is slowdown by  a factor of $\sqrt{\frac{d}{\log(1+\SNR)}}$
over the classic convergence rate and no over-the-air optimization scheme can achieve the classic convergence rate unless the $\SNR$ is sufficiently high. 

\subsection{Performance and limitations of analog schemes}
Next, we show that a simple analog coding scheme attains the optimal convergence rate at low $\SNR$. Specifically, we consider the scheme from~\cite{Kobi20} where
the subgradient estimate is scaled-down appropriately to satisfy the power constraint in \eqref{e:Power_constraint}, sent coordinate-by-coordinate over $d$ channel-uses, and then scaled-up before using
it in a gradient descent procedure.
We call this analog code the {\em scaled transmission} analog code.
Throughout the paper, our first-order optimization algorithm remains projected subgradient descent algorithm (PSGD), with different codes and associated decoding schemes to
get back the transmitted subgradient estimate. 
\begin{thm}\label{t:AUB}
The over-the-air optimization procedure $(\pi, \varphi)$ comprising
  the scaled transmission analog code and PSGD satisfies
  \[
  \sup_{(f,O)\in \cO}
  \cE(f, \pi^{\varphi O})  \leq \frac{cDB}{\sqrt{N}} \cdot \sqrt{d+\frac{d}{\SNR}},  
  \]
  where $c$ is a universal constant. 

\end{thm}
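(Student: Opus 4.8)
The plan is to observe that, once the receiver undoes the scaling, the scaled transmission scheme feeds PSGD an honest stochastic first-order oracle whose mean-square bound is inflated only from $B^2$ to $B^2(1+1/\SNR)$; the stated rate then follows by substituting this inflated bound, together with the reduced iteration count $T=\lfloor N/d\rfloor$, into the classical SGD guarantee (the upper bound in the $\mathcal{E}^\ast_{classic}$ benchmark).

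First I would pin down the code. Take block length $\ell=d$ (so this is an analog code with $\mathbf A=\alpha\mathbf I_d$) and scalar gain $\alpha:=\sqrt{dP}/B$, and let the encoder send $\varphi(\hat g_t)=\alpha\,\hat g_t$, one coordinate per channel use. Since $\E{\norm{\hat g_t}^2}\le B^2$, the average power constraint \eqref{e:Power_constraint} holds with equality in the worst case: $\E{\norm{\varphi(\hat g_t)}^2}=\alpha^2\E{\norm{\hat g_t}^2}\le\alpha^2 B^2=dP=\ell P$. At the receiver form $\tilde g_t:=Y_t/\alpha$; by \eqref{e:Gaussian_channel}, $\tilde g_t=\hat g_t+\tilde Z_t$ where $\tilde Z_t$ has i.i.d.\ $\mathcal{N}(0,\sigma^2/\alpha^2)$ entries and $\sigma^2/\alpha^2=\sigma^2 B^2/(dP)=B^2/(d\,\SNR)$. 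Next I would check that $(f,\tilde O)$, with $\tilde O$ returning $\tilde g_t$ at the query $x_t$, belongs to $\cO$ with $B$ replaced by $\tilde B:=B\sqrt{1+1/\SNR}$: the channel noise is independent of $\hat g_t$ and of the algorithm's past, hence $\E{\tilde g_t\mid x_t}=\E{\hat g_t\mid x_t}\in\partial f(x_t)$ (so \eqref{e:unbiasedness} holds) and $\E{\norm{\tilde g_t}^2\mid x_t}=\E{\norm{\hat g_t}^2\mid x_t}+d\cdot\sigma^2/\alpha^2\le B^2+B^2/\SNR=\tilde B^2$ (so \eqref{e:mean_square} holds). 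Thus running PSGD on $\{\tilde g_t\}$ is exactly running PSGD against an oracle in $\cO$ with parameter $\tilde B$ over a domain of diameter $D$.

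Finally, each oracle query costs $\ell=d$ channel uses, so the protocol completes $T=\lfloor N/d\rfloor$ iterations. Invoking the classical SGD upper bound (standard constant step size $\eta=D/(\tilde B\sqrt T)$, with averaged output) gives
\[
\sup_{(f,O)\in\cO}\cE(f,\pi^{\varphi O})\;\le\;\frac{c_1 D\tilde B}{\sqrt T}\;=\;\frac{c_1 DB\sqrt{1+1/\SNR}}{\sqrt{\lfloor N/d\rfloor}}\;\le\;\frac{cDB}{\sqrt N}\,\sqrt{d+\frac{d}{\SNR}},
\]
where the last step uses $\lfloor N/d\rfloor\ge N/(2d)$ for $N\ge d$ and absorbs the $\sqrt2$ into the universal constant $c$. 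I do not anticipate a real obstacle; the two points needing care are (i) confirming that the channel noise is independent of the adaptively chosen query point $x_t$, which is precisely what lets the decoded sequence inherit the two oracle axioms and reduces the problem to the noiseless $\mathcal{E}^\ast_{classic}$ setting, and (ii) the mild bookkeeping that turns a $T$-iteration rate into an $N$-channel-use rate, where the rounding in $T=\lfloor N/d\rfloor$ costs only a constant.
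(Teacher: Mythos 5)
Your proposal is correct and matches the paper's argument in essence: the paper also scales by $\sqrt{Pd}/B$, rescales at the receiver, verifies zero bias and a second moment of $B^2(1+1/\SNR)$ for the decoded gradient, and then plugs into its general PSGD convergence lemma with $\ell=d$ channel uses per query, which is just the classical $D\tilde B/\sqrt{T}$ guarantee you invoke directly. The only cosmetic difference is that you phrase the reduction as membership of the decoded sequence in $\cO$ with inflated parameter $\tilde B$ rather than via the paper's $\alpha(\pi,\varphi,\psi)$, $\beta(\pi,\varphi,\psi)$ bookkeeping.
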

Since $\sqrt{d+({d}/{\SNR})}\leq \sqrt{2d/\SNR}\leq \sqrt{3d/\log (1+\SNR)}$
for a sufficiently small $\SNR$, we get the following corollary in view of Theorem~\ref{t:LB}
and the result above. 

\begin{cor}\label{c:AnOpt}
There exist universal constants $c_1, c_2$ such that  for $\SNR \in (0,1)$ (i.e., low $\SNR$) and $N \geq \frac{d}{\log (1+\SNR)},$ we have
  \begin{align*}
\frac{c_1DB}{\sqrt{N}} \cdot
\sqrt{\frac{d}{\log(1+\SNR)}}\leq  \cE^\ast_{analog}(N)
\leq \frac{c_2DB}{\sqrt{N}} \cdot
\sqrt{\frac{d}{\log(1+\SNR)}}.
  \end{align*}
\end{cor}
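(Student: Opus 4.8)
The plan is to derive Corollary~\ref{c:AnOpt} directly by sandwiching $\cE^\ast_{analog}(N)$ between the general lower bound of Theorem~\ref{t:LB} and the analog-code upper bound of Theorem~\ref{t:AUB}; in the regime $\SNR\in(0,1)$ the two bounds already match up to an absolute constant once the $\min\{\cdot,\cdot\}$ in Theorem~\ref{t:LB} and the $1+1/\SNR$ factor in Theorem~\ref{t:AUB} are simplified. No new optimization or information-theoretic argument is needed; the content is purely the elementary algebra that makes the two estimates meet.

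For the lower bound, first use the inclusion $\cE^\ast_{analog}(N)\ge\cE^\ast(N)$ noted in Definition~\ref{d:analog_schemes}, so it suffices to bound $\cE^\ast(N)$ from below. Since $\SNR<1$ we have $\tfrac12\log(1+\SNR)<\tfrac12<1\le d$, hence $\min\{d,\tfrac12\log(1+\SNR)\}=\tfrac12\log(1+\SNR)$. The hypothesis $N\ge d/\log(1+\SNR)$ of the corollary is exactly the one required by Theorem~\ref{t:LB}, so that theorem yields
\[
\cE^\ast(N)\;\ge\;\frac{cDB}{\sqrt{N}}\sqrt{\frac{d}{\tfrac12\log(1+\SNR)}}\;\ge\;\frac{cDB}{\sqrt{N}}\sqrt{\frac{d}{\log(1+\SNR)}},
\]
and we may take $c_1=c$.

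For the upper bound, apply Theorem~\ref{t:AUB}: the scaled transmission analog code together with PSGD is a valid analog over-the-air protocol, so $\cE^\ast_{analog}(N)\le\frac{cDB}{\sqrt{N}}\sqrt{d+d/\SNR}$. Because $\SNR<1$ we have $d+d/\SNR\le 2d/\SNR$, and from $\ln(1+x)\le x$ we get $\log(1+\SNR)=\ln(1+\SNR)/\ln 2\le \SNR/\ln 2<\tfrac32\SNR$, whence $2d/\SNR<3d/\log(1+\SNR)$. Therefore $\sqrt{d+d/\SNR}<\sqrt 3\,\sqrt{d/\log(1+\SNR)}$, giving $\cE^\ast_{analog}(N)\le\frac{\sqrt 3\,cDB}{\sqrt N}\sqrt{d/\log(1+\SNR)}$, so $c_2=\sqrt 3\,c$ works. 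The only point needing care — hardly an obstacle — is that the two elementary facts just used, $\min\{d,\tfrac12\log(1+\SNR)\}=\tfrac12\log(1+\SNR)$ and $\log(1+\SNR)<\tfrac32\SNR$, are valid precisely in the stated range $\SNR\in(0,1)$, $d\ge 1$; once these are checked the sandwich is immediate.
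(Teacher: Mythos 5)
Your proposal is correct and matches the paper's own argument: the paper likewise obtains the corollary by sandwiching $\cE^\ast_{analog}(N)$ between Theorem~\ref{t:LB} (via $\cE^\ast_{analog}(N)\geq\cE^\ast(N)$) and Theorem~\ref{t:AUB}, using the same chain $\sqrt{d+d/\SNR}\leq\sqrt{2d/\SNR}\leq\sqrt{3d/\log(1+\SNR)}$. Your only addition is to verify explicitly that these elementary inequalities hold throughout $\SNR\in(0,1)$, where the paper merely says ``for a sufficiently small $\SNR$''; this is a slightly more careful but otherwise identical argument.
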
  
\begin{rem}
We remark that a slightly different analog coding scheme can also guarantee the same performance as the scaled transmission analog code given in Theorem \ref{t:AUB} and performs better in our experiments presented in Section \ref{s:exp}. In this scheme, the noisy subgradient estimate is first randomly rotated by a random matrix, and then only a few of its coordinates are used for the gradient descent procedure, which, in turn,  are sampled randomly.  Both the random matrix and random coordinate sampling are generated using shared randomness between the encoder and the algorithm. Notice that such an algorithm needs only few channel-uses per descent step instead of scaled transmission analog code that uses $d$ channel-uses per descent step.
We provide a detailed description and analysis of this scheme in Section \ref{s:scheme1}.
\end{rem}

Interestingly, our next result shows that the scaled transmission scheme is the optimal analog coding scheme up to constant factors. 
In particular, while analog codes are optimal for low $\SNR$, they can be far from optimal at high $\SNR$. 
\begin{thm}\label{t:ALB}
For some universal constant $c \in (0, 1)$  and $N \geq d(1+1/\SNR)$, we have 
\[
\cE^\ast_{analog}(N) \geq \frac{cDB}{\sqrt{N}}\cdot \sqrt{d+\frac{d}{\SNR}}.
\]
\end{thm}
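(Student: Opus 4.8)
\textbf{Proof proposal for Theorem~\ref{t:ALB}.}
The plan is to build a hard instance tailored to analog codes --- a family of linear objectives paired with a new class of \emph{Gaussian} first-order oracles --- and then feed it to the reduction from stochastic convex optimization to coordinate-wise sign estimation used in~\cite{ACMT21informationconstrained, agarwal2012information, acharya2020general}. Fix the cube $\X=[-D/(2\sqrt d),\,D/(2\sqrt d)]^d$, which lies in $\bX$ (its $\ell_2$-diameter is $D$). For $v\in\{-1,+1\}^d$ and a scale $\delta\in(0,1]$ to be chosen at the very end, set $\gamma:=B/\sqrt2$, $\tau^2:=B^2/(2d)$, let $f_v(x):=-\tfrac{\delta\gamma}{\sqrt d}\langle v,x\rangle$, and let the oracle $O_v$ return, independently on each query, $\hat g\sim\mathcal{N}\!\big(-\tfrac{\delta\gamma}{\sqrt d}v,\ \tau^2 I_d\big)$. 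Then $(f_v,O_v)\in\cO$: the oracle is unbiased and $\E{\|\hat g\|^2}=\delta^2\gamma^2+d\tau^2\le B^2$. The minimizer of $f_v$ over the cube is the corner $x^\ast_v$ with $(x^\ast_v)_j=\mathrm{sign}(v_j)\,D/(2\sqrt d)$, and since $|\hat x_j|\le D/(2\sqrt d)$ for every feasible $\hat x$, one has $f_v(\hat x)-f_v(x^\ast_v)\ge\tfrac{\delta\gamma D}{2d}\sum_{j}\indic{v_j\hat x_j\le 0}$. Hence, writing $\hat s_j$ for a sign of the $j$th coordinate of the output $x_T$, every protocol satisfies $\E{\cE(f_v,\pi^{\varphi O_v})}\ge\tfrac{\delta\gamma D}{2d}\sum_{j=1}^d\bPr{\hat s_j\ne v_j}$; averaging over $v\sim\mathrm{Unif}(\{-1,+1\}^d)$ and applying the coordinate-wise Fano argument of~\cite{ACMT21informationconstrained}, the right-hand side is at least $c_0\,\gamma D\,\delta/2$ for a universal $c_0>0$, \emph{provided} $I\big(V;(Y_1,\dots,Y_T,U)\big)\le d/8$.

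So the whole theorem rests on this single mutual-information bound for an arbitrary analog protocol with $\ell T\le N$, and this is where the Gaussian oracle does the work. Since $f_v$ is linear the oracle law is query-independent, so conditioned on the history $(Y_1,\dots,Y_{t-1},U)$ (and $\pi$'s private coins) the $t$th codeword is $\mathbf{A}_t\hat g_t$ for a deterministic $\ell\times d$ matrix $\mathbf{A}_t$, and $Y_t=-\tfrac{\delta\gamma}{\sqrt d}\mathbf{A}_t V+\mathbf{A}_t\eta_t+Z_t$ is conditionally Gaussian. Writing $M_t:=\mathbf{A}_t\mathbf{A}_t^\top$ with eigenvalues $\mu_i$, the Gaussian maximum-entropy bound gives $I(V;Y_t\mid Y_1,\dots,Y_{t-1},U)\le\tfrac12\sum_i\log\!\big(1+\tfrac{(\delta^2\gamma^2/d)\mu_i}{\tau^2\mu_i+\sigma^2}\big)$. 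The entire point of the choice $\gamma^2/d=\tau^2$ is \emph{scale invariance}: the $i$th summand's argument is $\tfrac{\delta^2\tau^2\mu_i}{\tau^2\mu_i+\sigma^2}\le\delta^2\min\{1,\tau^2\mu_i/\sigma^2\}$, so however $\mathbf{A}_t$ rotates, rescales, or concentrates the signal, the effective per-direction SNR can never exceed $\delta^2$. Summing the eigenvalues, invoking the power constraint --- which, after averaging over $v$, forces $\tau^2\,\E{\tr M_t}\le\E{\|\mathbf{A}_t\hat g_t\|^2}\le\ell P$ since the oracle noise alone puts $\tau^2\tr M_t$ of expected energy into the codeword --- and using concavity of $x\mapsto\min\{\ell,x\}$, one gets $I(V;Y_t\mid Y_1,\dots,Y_{t-1},U)\le\tfrac{\delta^2\ell}{\ln2}\cdot\tfrac{\SNR}{1+\SNR}$. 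Chaining over $t\le T$ and using $I(V;U)=0$ yields
\[
I\big(V;(Y_1,\dots,Y_T,U)\big)\ \le\ \frac{\delta^2 N}{\ln 2}\cdot\frac{\SNR}{1+\SNR}.
\]
Now set $\delta^2:=\tfrac{\ln2}{8}\cdot\tfrac dN\big(1+\tfrac1\SNR\big)$, which is $\le1$ precisely because $N\ge d(1+1/\SNR)$ and which makes the last display $\le d/8$. Then $\sum_j\bPr{\hat s_j\ne V_j}\ge c_0 d$, so $\EE{V}{\cE(f_V,\pi^{\varphi O_V})}\ge c_0\gamma D\delta/2$ for every analog protocol, and therefore
\[
\cE^\ast_{analog}(N)\ \ge\ \inf_{(\pi,\varphi)}\EE{V}{\cE(f_V,\pi^{\varphi O_V})}\ \ge\ \frac{c_0\gamma D\,\delta}{2}\ =\ c\,\frac{DB}{\sqrt N}\,\sqrt{d+\frac d\SNR}
\]
for a universal constant $c\in(0,1)$.

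I expect the mutual-information bound to be the main obstacle. A general analog scheme might try to beat it by dumping its limited average power onto a few coordinates to inflate SNR there, or by adapting $\mathbf{A}_t$ to the observed history, and the naive ``$\tfrac12\log(1+\SNR)$ bits per channel use'' picture for the AWGN channel gives no reason these moves should fail. The Gaussian oracle is exactly what defeats both at once: the transmitted signal $-\tfrac{\delta\gamma}{\sqrt d}\mathbf{A}_t V$ and the oracle's own noise $\mathbf{A}_t\eta_t$ are pushed through the \emph{same} matrix, so the projected signal-to-oracle-noise ratio is a fixed $\delta^2$ in every direction whatever $\mathbf{A}_t$ is, and the only surviving effect of the power budget is the $\SNR/(1+\SNR)$ damping from the channel noise $Z_t$ --- which is why the $\sqrt d$ slowdown persists even as $\SNR\to\infty$. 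The one delicate point in carrying this out will be checking that the conditionally-Gaussian computation genuinely survives taking expectations over the history and over all public and private coins --- i.e.\ that neither $T$-round adaptivity nor randomization of $\mathbf{A}_t$ buys the protocol anything --- after which the remainder is the standard Assouad/Fano reduction of~\cite{ACMT21informationconstrained}.
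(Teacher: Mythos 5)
Your proposal is correct and follows essentially the same route as the paper: the same isotropic Gaussian oracle (covariance proportional to $(B^2/d)\,\mathbf{I}_d$, so that the signal and the oracle noise are pushed through the same matrix), the same Gaussian maximum-entropy / eigenvalue / Jensen computation giving $I(V\wedge Y_t)\lesssim \ell\delta^2\,\SNR/(1+\SNR)$, and the same choice $\delta^2\asymp (d/N)(1+1/\SNR)$; the paper merely uses the $\ell_1$-type functions of \eqref{eq:def:gv:convex:p2} together with the bound \eqref{e:C_P2} where you use linear functions and an explicit Assouad step. The only place you go beyond the paper is in allowing history-dependent matrices $\mathbf{A}_t$ --- unnecessary under the paper's definition of analog codes, where $\mathbf{A}$ is independent of the gradient estimates and one can simply use $I(V\wedge Y^{T})\le T\,I(V\wedge Y_1)$ --- and there your per-round conditional max-entropy step would indeed need the care you flag, since $\mathrm{Cov}(V\mid Y^{t-1})$ need not be dominated by the identity.
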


Note that for small $\SNR$,  we have $1+\frac{1}{\SNR}\approx \frac{1}{\log(1+\SNR)}$, and thus, Theorem \ref{t:ALB} shows that analog codes are optimal at low $\SNR$. Theorem \ref{t:ALB}  also shows that in comparison to Theorem \ref{t:LB} analog schemes can lead to a slowdown of $\sqrt{d}$ for high values of $\SNR$. Even when $\SNR$ goes to infinity, we can't get the classic, dimension-free convergence rate back. Note that the upper bound in Theorem~\ref{t:AUB} matches the lower bound of Theorem~\ref{t:ALB} for large $\SNR$, establishing that the scaled transmission analog code of~\cite{Kobi20} is optimal among analog coding schemes even at high $\SNR$. We remark that the convergence analysis in~\cite{Kobi20} required additional smoothness assumptions and is not valid for our setting.

\begin{rem}\label{r:topk}
While our definition of analog schemes does not include the top-$k$ (see, for instance, \cite{alistarh2018convergence} and the references therein) analog coding schemes, we can also derive a lower bound for such schemes.  Even for such analog schemes,
similar lower bound as above holds and 
  the convergence rate does not match the classic convergence rate at high $\SNR$. We defer the details to the Appendix \ref{ap:TopK}.
\end{rem}

\subsection{Optimality of ASK}\label{s:optASK}
We now present a code that almost attains the convergence rate in the lower bound of Theorem~\ref{t:LB}.
Our encoder $\varphi$ quantizes the noisy subgradient estimates by using a \emph{gain-shape} quantizer ($cf.$\cite{gersho2012vector}). 
\begin{defn}[Gain-shape quantizer]\label{Gain-shape quantizer}
A Quantizer Q is defined to be a gain-shape quantizer if it has the following form
\[Q(Y)=Q_g(\norm{Y}_2)\cdot Q_s(Y/\norm{Y}_2),\]
where $Q_g$ is any $\R \rightarrow \R$ quantizer and $Q_s$ is any $\R^d \rightarrow \R^d$ quantizer.
\end{defn}
 That is, the encoder separately quantizes the norm of the subgradient, its {\em gain}, and the normalized vector obtained after dividing the subgradient by its norm, its {\em shape}.
The quantized gain and shape are sent over two different channel-uses using ASK code. We note
that this scheme is not strictly an ASK code since we use the channel twice. However, this is just a technicality and can be avoided by a more tedious analysis.

To clearly present our ideas, we first present an ASK code which works in 
an ideal setting,
captured by the  following assumptions for the quantized subgradient:
\begin{enumerate}
\item (Perfect gain quantization) We assume that the norm of subgradient vector can be perfectly sent to the algorithm i.e.,  without any induced noise.  Further, we don't account for the channel-uses in sending the norm.
\item (An ideal shape quantizer) There exists an ideal shape\footnote{We call this an ideal quantizer because such a quantizer would achieve the lower bound for stochastic optimization in \cite{mayekar2020ratq}, where the gradients are quantized to $r$-bits.} quantizer which quantizes the shape of the vector to a mean square error of $d/r$ and where the quantized output is an unbiased estimate of the input. 
\end{enumerate}
Recall that our optimization algorithm is PSGD with an appropriate decoding rule to decode the noisy codewords sent over the channel.
\begin{thm}\label{t:ASK}
  Under Assumptions 1-2 above, there exists an over-the-air optimization procedure $(\pi, \varphi)$ with
  an ASK code $\varphi$ for which we have
\[ \sup_{(f,O)\in \cO}
\cE(f, \pi^{\varphi O})  \leq \frac{2DB}{\sqrt{N}} \cdot \sqrt{\frac{d}{{\min\lbrace d,\log \left(\sqrt{\frac{4\SNR}{\ln N}}+1\right)\rbrace}}}.
\]
Furthermore, the ASK code quantizes the subgradient vector
to ${r= \log \left(\sqrt{\frac{4\SNR}{\ln N}}+1\right)}$ bits.
\end{thm}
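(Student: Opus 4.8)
The plan is to run projected subgradient descent (PSGD) for $T=N$ steps, feeding it at step $t$ a decoded version $\tilde g_t$ of the oracle's subgradient estimate $\hat g_t$, and to show that $\tilde g_t$ is an unbiased subgradient estimate with $\E{\norm{\tilde g_t}^2\mid x_t}\le B^2(1+d/r)(1+o(1))$. Then the standard analysis behind the classical $1/\sqrt N$ rate (constant step size $\eta=D/(\tilde B\sqrt T)$ with $\tilde B^2=B^2(1+d/r)(1+o(1))$, outputting the running iterate average, which we rename $x_T$) gives convergence error at most $D\tilde B/\sqrt N$. Bounding $\sqrt{1+d/r}\le\sqrt2\,\sqrt{d/\min\{d,r\}}$ — immediate in each of the cases $r\le d$ and $r>d$ — and taking $r=\min\{d,\ \log(\sqrt{4\SNR/\ln N}+1)\}$, so that $\min\{d,r\}=\min\{d,\ \log(\sqrt{4\SNR/\ln N}+1)\}$, the bound becomes at most $\sqrt2(1+o(1))\cdot\tfrac{DB}{\sqrt N}\sqrt{d/\min\{d,r\}}$, which is below the claimed $\tfrac{2DB}{\sqrt N}\sqrt{d/\min\{d,r\}}$ once $N$ is large enough that the $o(1)$ term is below $\sqrt2-1$.

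Here is the code I would use. At step $t$, query the oracle at $x_t$, obtain $\hat g_t$, and by Assumption~1 deliver its gain $a_t=\norm{\hat g_t}$ to the algorithm at no channel cost. Apply the ideal shape quantizer of Assumption~2 to $u_t=\hat g_t/a_t$, producing a random index $J_t\in[2^r]$ whose codeword $v_{J_t}$ satisfies $\E{v_{J_t}\mid u_t}=u_t$ and $\E{\norm{v_{J_t}-u_t}^2\mid u_t}\le d/r$. Transmit $J_t$ by sending the $J_t$-th point of the $2^r$-point ASK constellation of Definition~\ref{d:ASK_codes} over one channel use. With $r=\min\{d,\ \log(\sqrt{4\SNR/\ln N}+1)\}$, the constellation spacing equals $\sigma\sqrt{\ln N}$ (and only larger when the cap $r=d$ is active; when $\SNR<\tfrac14\ln N$ the claimed bound is so weak that a crude scheme suffices, so assume $\SNR\ge\tfrac14\ln N$), so nearest-neighbor hard decisions return a wrong index with probability $p_e\le2\exp(-\tfrac18\ln N)=2N^{-1/8}$ per symbol. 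The decoder does not simply hard-decode: letting $\hat J_t$ be the nearest constellation point to the received value and $W$ the $2^r\times2^r$ transition matrix $W_{jk}=\bPr{\hat J_t=j\mid J_t=k}$ of the ASK channel under hard decisions, it outputs $\tilde g_t:=a_t\,VW^{-1}e_{\hat J_t}$, where $V=[v_1,\dots,v_{2^r}]$ is the $d\times2^r$ codeword matrix. Since $W$ is column-stochastic with off-diagonal column mass at most $p_e<\tfrac12$, it is invertible, and a Neumann series gives $\norm{W^{-1}-I}_{1\to1}=O(p_e)$.

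Two facts then finish the argument. \emph{Unbiasedness:} $\E{e_{\hat J_t}\mid J_t}=We_{J_t}$, hence $\E{VW^{-1}e_{\hat J_t}\mid J_t}=VW^{-1}We_{J_t}=v_{J_t}$, so $\E{\tilde g_t\mid\hat g_t}=a_t\,\E{v_{J_t}\mid u_t}=a_tu_t=\hat g_t$, and therefore $\E{\tilde g_t\mid x_t}=\E{\hat g_t\mid x_t}\in\partial f(x_t)$. \emph{Second moment:} writing $VW^{-1}e_j=v_j+\delta_j$ with $\norm{\delta_j}\le(\max_k\norm{v_k})\cdot\norm{(W^{-1}-I)e_j}_1=O(p_e)$, one gets $\E{\norm{VW^{-1}e_{\hat J_t}}^2\mid u_t}\le\E{\norm{v_{J_t}}^2\mid u_t}+O(p_e)$, and $\E{\norm{v_{J_t}}^2\mid u_t}=\norm{u_t}^2+\E{\norm{v_{J_t}-u_t}^2\mid u_t}\le1+d/r$; since the shape codebook is fixed while $p_e=2N^{-1/8}\to0$, this yields $\E{\norm{\tilde g_t}^2\mid x_t}\le\E{\norm{\hat g_t}^2\mid x_t}\cdot(1+d/r+O(p_e))\le B^2(1+d/r)(1+o(1))$, as needed. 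The code quantizes the subgradient to $r=\log(\sqrt{4\SNR/\ln N}+1)$ bits (or fewer, when the cap at $d$ is active).

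The main obstacle is the interaction between the discreteness of ASK and the continuous Gaussian noise. A naive hard-decoding decoder produces a gradient estimate with a per-step bias of order $B\sqrt{1+d/r}\cdot p_e$, whose contribution $D\cdot B\sqrt{1+d/r}\cdot p_e$ to the convergence error exceeds the $\Theta(1/\sqrt N)$ main term by a factor $\Theta(N^{3/8})$ and is therefore fatal; the linear unbiasing decoder removes this bias, at the price of having to control the second moment of the corrected estimate, and it is exactly to keep that control cheap — symbol-error probability polynomially small, so $W^{-1}-I$ is a vanishing perturbation and the variance inflation is $1+o(1)$ rather than a blow-up — that $r$ is chosen with the $\ln N$ factor, fixing the constellation spacing at $\sigma\sqrt{\ln N}$. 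The remaining details (boundary constellation points have one-sided decision regions; the case split $r\lessgtr d$; verifying the Neumann bound on $\norm{W^{-1}-I}_{1\to1}$) are routine.
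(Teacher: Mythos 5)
Your argument is correct in substance but takes a genuinely different route from the paper's. The paper keeps the naive minimum-distance (hard) decoder and never tries to make the per-step estimate unbiased: it conditions on the global event $A_N$ that all $N$ transmitted constellation points are decoded correctly, applies the standard PSGD bound with $\alpha=\sqrt{d/r}$ on that event (where the channel is effectively noiseless), and bounds the contribution of $A_N^c$ trivially by $DB\cdot\bPr{A_N^c}\le DB/\sqrt{N}$ via a union bound over the $N$ symbol errors. Your diagnosis that a per-step bias--variance analysis of the hard decoder fails (the bias contribution $D\beta\sim DB\sqrt{d/r}\,p_e$ swamps the $1/\sqrt{N}$ main term) is accurate, but the paper sidesteps the bias by event conditioning rather than by removing it; your fix is instead to post-multiply by $W^{-1}$ so that the estimate is exactly conditionally unbiased and the usual unbiased-SGD analysis applies with no conditioning at all. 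What each approach buys: the paper's is shorter and uses only nearest-neighbor decoding, but it needs the \emph{total} symbol-error probability over all $N$ uses to be $O(1/\sqrt{N})$; yours only needs the \emph{per-symbol} error probability to vanish (so that $W^{-1}-I$ is a small perturbation), a weaker requirement that would in principle tolerate a denser constellation and hence a slightly larger $r$, at the price of a decoder that must know and invert the transition matrix $W$ (which depends on $\sigma$ and the constellation).

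One small gap to flag: your second-moment bound charges the decoding perturbation as $O(p_e\max_k\norm{v_k}^2)$ and dismisses it because ``the shape codebook is fixed,'' but Assumption 2 only controls $\E{\norm{v_{J_t}-u_t}^2\mid u_t}\le d/r$ and unbiasedness; it says nothing about $\max_k\norm{v_k}$. You should either assume the codewords are bounded (say $\norm{v_k}^2=O(1+d/r)$) or clip them; the paper implicitly avoids this by never evaluating the decoder output off the event $A_N$. Your handling of the regime $\SNR<\tfrac14\ln N$ (where $r<1$) is also waved away, though the paper's proof is equally silent there.
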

\begin{rem}[Resolution grows with $\SNR$]
  We remark that the number of bits $r$ used to express the subgradients in our algorithm
    grows with $\SNR$ as
     ${r= \log \left(\sqrt{\frac{4\SNR}{\ln N}}+1\right)}$ bits, namely the resolution must grow logarithmically with $\SNR$.
\end{rem}

We now state our complete result, without making ideal assumptions.  This time the gain is sent in one channel-use by simply scaling the gain value appropriately to satisfy the power constraint, which is similar to the scaled transmission analog code from Theorem \ref{t:AUB}.  For quantizing the shape, our scheme uses the quantizer RATQ from \cite{mayekar2020ratq}. 
Again note above,  this scheme is not formally an ASK code since we send the gain over a separate channel. Nonetheless, they are similar, in essence, to ASK codes as only the transmission of gain, a scalar, is not accounted for in the ASK code.

\begin{thm}\label{t:ASK_1}
  For  $d$, $\SNR$, and $N$ satisfying\footnote{$\ln^*a$ denotes the smallest number of $\ln$ operations on $a$ required to make it less than 1. Also, we remark that we can prove a similar convergence bound without any upper bound on $\ln^* (d/3)$; we only make this assumption to simplify the upper bound expression.}  $\ln^\ast (d/3) \leq 7$ and
$\log \left(\sqrt{\frac{4\SNR}{\ln N}}+1\right) \geq 6$, we have
\[
\cE^\ast(N) \leq \frac{2DB}{\sqrt{N}} \cdot \sqrt{\frac{d}{\min\{d, \frac{r}{48}\}}},
\]
where $\displaystyle{r= \log \left(\sqrt{\frac{4\SNR}{\ln N}}+1\right)}$.
Furthermore, this bound is attained by using an over-the-air optimization procedure consisting of PSGD as the optimization algorithm and an ASK-like encoding procedure.\footnote{In particular, the encoding procedure uses two channel-uses for transmitting the subgradient estimate. In the first channel-use, an ASK code is used to transmit the shape of the subgradient vector, which is quantized to $r= \log \left(\sqrt{\frac{4\SNR}{\ln N}}+1\right)$ bits. In the second channel-use, the gain of the subgradient vector is transmitted after scaling it appropriately to satisfy the power constraint.}
\end{thm}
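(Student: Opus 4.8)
\textbf{Proof proposal for Theorem~\ref{t:ASK_1}.}

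The plan is to remove the two idealizations of Theorem~\ref{t:ASK}, analyzing the two channel-uses separately and then combining the resulting error budgets into a standard PSGD convergence bound. First I would set up the per-step picture: in each descent step we have a noisy subgradient estimate $\hat g_t$ with $\E{\|\hat g_t\|^2}\le B^2$, and we must produce, at the algorithm side, an estimate $\tilde g_t$ of $\hat g_t$ that is (approximately) unbiased and has controlled mean-square error, while respecting the average power constraint $P$ split across the two channel uses. The gain channel use is handled as in the scaled-transmission analog code of Theorem~\ref{t:AUB}: transmit $\|\hat g_t\|$ scaled so that $\E{(\text{scaled gain})^2}\le P$, receive it through the AWGN channel, and rescale; this produces an unbiased estimate of $\|\hat g_t\|$ whose extra variance is of order $\sigma^2/(\text{scale})^2$, i.e.\ proportional to $\E{\|\hat g_t\|^2}/\SNR \le B^2/\SNR$. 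The shape channel use is where RATQ from~\cite{mayekar2020ratq} enters: we quantize the unit vector $\hat g_t/\|\hat g_t\|$ to $r$ bits, and then transmit those $r$ bits over one channel use via an ASK constellation of $2^r$ points.

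The key technical point is to choose $r$ so that the ASK symbols are decoded essentially error-free over the remaining horizon of queries, and to show RATQ's guarantee degrades gracefully. Concretely, with $2^r$ equispaced points in $[-\sqrt P,\sqrt P]$ the minimum distance is $2\sqrt P/(2^r-1)$, so a nearest-point decoder errs with probability at most $\exp(-\Omega(P/(\sigma^2 (2^r-1)^2)))$ per symbol; choosing $r=\log(\sqrt{4\SNR/\ln N}+1)$ makes $(2^r-1)^2 = 4\SNR/\ln N$, so the per-symbol error probability is at most $N^{-\Omega(1)}$, and a union bound over the $\le N$ channel uses keeps the total decoding-error probability negligible (on that failure event we simply bound $\|\tilde g_t\|$ crudely, contributing a lower-order term). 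Conditioned on correct decoding, the shape estimate is exactly RATQ's output, which by the analysis of~\cite{mayekar2020ratq} is unbiased and has mean-square error $O(d/r)$ on unit vectors — this is where the $\ln^*(d/3)\le 7$ hypothesis is used, to absorb RATQ's $\ln^*$-type overhead into a constant (hence the $r/48$ in the bound). Multiplying the shape error by the (estimated) gain and adding the gain-estimation variance, the overall estimate $\tilde g_t$ of $\hat g_t$ satisfies $\E{\tilde g_t \mid \hat g_t} = \hat g_t$ (up to the negligible failure event) and $\E{\|\tilde g_t\|^2} \le B^2(1 + c\,d/r)$ for a constant $c$.

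With this per-step oracle-plus-channel guarantee in hand, the last step is routine: feeding $\tilde g_t$ into PSGD over $T$ steps with step size tuned to $D$ and the effective second-moment bound $B^2\cdot O(d/r)$ gives, by the standard mirror-descent/PSGD convergence lemma, a convergence error of order $DB\sqrt{d/r}/\sqrt T$; since the scheme uses two channel uses per step, $T = N/2$, yielding $\cE^\ast(N)\le (2DB/\sqrt N)\sqrt{d/\min\{d,r/48\}}$ after book-keeping of constants, with the $\min\{d,\cdot\}$ reflecting that one never needs $r$ larger than $d$ (at which point full-precision transmission is already essentially free and the classic rate is recovered up to the $\sqrt d$-type factors implicit here). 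The condition $\log(\sqrt{4\SNR/\ln N}+1)\ge 6$ is just to guarantee $r$ is large enough that RATQ's constant-factor overhead is dominated and that $2^r\ge$ some minimal constellation size. The main obstacle I expect is the clean decoupling of the gain and shape errors while keeping the final estimate unbiased: RATQ is designed to be unbiased, but multiplying an unbiased shape estimate by an independent unbiased gain estimate preserves unbiasedness only if the two channel uses' randomness is independent of each other and of $\hat g_t$ — so care is needed in stating the independence structure of the shared randomness, and in handling the rare decoding-failure event without destroying unbiasedness (this likely forces a small bias term that must be shown to contribute only $o(DB/\sqrt N)$).
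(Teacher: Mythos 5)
Your proposal follows essentially the same route as the paper's proof: a gain--shape decomposition with the gain sent by scaled analog transmission, the shape quantized by (subsampled) RATQ and mapped to a $2^r$-point ASK constellation, minimum-distance decoding with a union bound over the $N$ channel uses and $r=\log\bigl(\sqrt{4\SNR/\ln N}+1\bigr)$ chosen to make the decoding-failure probability $O(1/\sqrt{N})$, and the two error budgets combined multiplicatively (using the independence of the gain and shape encoders' randomness, which you correctly flag) before feeding into the PSGD convergence lemma with $T=N/2$. The only cosmetic difference is the failure event: the paper conditions the entire convergence analysis on correct decoding and bounds the suboptimality on the complement directly by $DB$, rather than tracking a bias in $\tilde g_t$ as you suggest, which sidesteps the unbiasedness concern you raise at the end.
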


\section{Proofs}\label{s:proof}
We first prove our lower bounds before coming to the algorithms and upper bounds. 

\subsection{ Summary of the method used for proving lower bounds}
 We follow the recipe of \cite{ACMT21informationconstrained} to prove our lower bounds.
The difficult functions we construct are the same as in
previous lower bounds for convex functions such as \cite{agarwal2012information}. We consider the domain $\X=\{x \in \R^d:\norm{x}_\infty\leq D/(2\sqrt{d} \}$, and consider the following class of functions on $\X$: For $v \in \{-1, 1 \}^d$, let
\begin{equation}
  \label{eq:def:gv:convex:p2}
  f_{v}(x) \eqdef \frac {2B\delta}{\sqrt{d}} \sum_{i=1}^{d} \mleft|x(i) - \frac{v(i) D}{2 \sqrt{d}}\mright|, \quad \forall\, x\in \X.
\end{equation}
Note that the gradient $g_v(x)$ of $f_v$ at $x\in \X$ is equal to $-2B\delta v /\sqrt{d}$,
i.e., it is independent of $x$. We will fix our noisy subgradient oracle $O_v$ later.
For any $O_v$, let $\hat{g}_t$ denote the output of the gradient oracle in iteration $t$. We will
consider a noisy oracle which outputs $\hat{g}_t$ that are i.i.d. from a distribution $p_v$
with mean $-2B\delta v /\sqrt{d}$.

For a given code $\varphi$ of length $\ell$, let $C_t=\varphi(g_t)$, $t=1, ...,T$. 
Let $V \sim Unif \{-1, 1\}^d$ and $Y^T=(Y_1, \ldots, Y_T)$
denote output of the AWGN channel when the inputs are $C^T=(C_1, \ldots, C_T)$. 
The following lower bound can be established by using results from\footnote{Note that the result in \cite{ACMT21informationconstrained} is for a more general class of adaptive channels.} \cite[Lemma 3, 4]{ACMT21informationconstrained}:
\begin{equation}\label{e:C_P2}
  \E{f_{V}(x_T)-f_{V}(x_V^\ast)} \geq {\frac{DB\delta}{6}}  \Bigg[1 -  \sqrt{
\frac{2}{d}
      \sum_{i=1}^d \mutualinfo{V(i)}{Y^T}}\Bigg].
\end{equation}
By the definition of $\cE^\ast(N)$, 
we have 
\begin{align} \label{e:minmax_sub-optimality}
  \cE^\ast(N) \geq  \E{f_{V}(x_T)-f_{V}(x_V^\ast)}.
\end{align}
Thus, it only remains to bound the mutual-information term. Note that this
bound holds for any oracle $O_v$;
we choose difficult oracles satisfying ~\eqref{e:unbiasedness} and~\eqref{e:mean_square}
to derive our lower bounds.

\subsection{Proof of Theorem \ref{t:LB}}
\paragraph{A difficult gradient oracle}
For each $f_v$ in \eqref{eq:def:gv:convex:p2}, consider a gradient oracle $O_v$ which
outputs $\hat{g}_t$ with
independent coordinates, each taking values
$-B/\sqrt{d}$ or $B/\sqrt{d}$ with probabilities $(1+2\delta v(i))/2$ and $(1-2\delta v(i))/2$, respectively. The parameter $\delta >0$ is to be chosen suitably later.
Note that  $\hat{g}_t$ are product Bernoulli distributed vectors with mean $-2B\delta v/\sqrt{d}$.
\paragraph{Bounding the mutual-information}
The following strong data processing inequality was derived in~\cite{acharya2020general}
for $\mutualinfo{V(i)}{Y^T}$ when the observations are product Bernoulli vectors:
\[
\sum_{i \in [d]}\mutualinfo{V(i)}{Y^T}\leq  c^{\prime}\delta^2 \max_{v \in \{-1,1\}^d}\max_{\varphi \in \C_{\ell}}\mutualinfo{\hat{g}^T}{Y^T},
\]
where $c^\prime$ is some constant.
Using the well-known formula for AWGN capacity (see \cite{CovTho06}),
we can show using the data processing inequality that 
\[\sum_{i \in [d]}\mutualinfo{V(i)}{Y^T}\leq  c^{\prime}\delta^2 N \min \{d, 1/2 \log(1+\SNR) \}.\]
The proof is completed by combining this bound
with \eqref{e:C_P2} and \eqref{e:minmax_sub-optimality},
and maximizing the right-side of \eqref{e:C_P2} by setting $\delta{=}\sqrt{d/(4 c^\prime N \min\{2d, \log(1+\SNR ))}$.

\subsection{Proof of Theorem \ref{t:ALB}} 
Consider the encoder $\varphi(\hat{g}_t)=\mathbf{A}\hat{g}_t$ corresponding to an analog coding scheme
for the functions
in \eqref{eq:def:gv:convex:p2}.
\paragraph{Gaussian oracle} For every $f_v$ and any query point $x_t$,  consider a Gaussian oracle that outputs: $\hat{g}(x_t){=}-2B\delta v/\sqrt{d}+G$,  where $G \sim \cN(0,B^2/d\mathbf{I}_d)$.
For matrix $\mathbf{A}\in \R^{\ell\times d}$,  the subgradients are encoded as $\varphi(\hat{g}(x_t))=\mathbf{A}\hat{g}(x_t)$ and sent over the Gaussian channel. 
\paragraph{Bounding the mutual-information}
We proceed as in the previous lower bound proof and first
note that $\sum_{i=1}^d I(V(i)\wedge Y^T)
\leq I(V\wedge Y^T)$ since $V(i)$ are i.i.d. Further, 
since $Y_1, .., Y_T$ are i.i.d. conditioned on $V$,
we have
$I(V\wedge Y^{T})\leq TI(V\wedge Y_1)$.  Thus,  it suffices to  bound the mutual information $I(V\wedge Y_1)$ which we do in the following lemma. Recall that the outputs $C_t=(-2B\delta/\sqrt{d})\mathbf{A}V+\mathbf{A}G$, where
$G$ denotes the Gaussian noise of the oracle,
satisfies the power constraint $\sum_{t=1}^T\E{\|C_t\|_2^2}\leq T\ell P$,
which implies that $\mathrm{Tr}(\mathbf{A}\mathbf{A}^{\top})B^2/d\leq \ell P/(1+4\delta^2)$.
Further,
$Y_t = C_t+Z_t$, where $Z_t$ is the channel noise in $\ell$ uses. 
\begin{lem} For $\mathbf{A}, G, V$ and $Y_t$
defined above, if  $\frac{\mathrm{Tr}(\mathbf{A}\mathbf{A}^{\top})B^2}{d}\leq \frac{\ell P}{1+4\delta^2}$,  then $\forall t\in [T],$ \[I(V\wedge Y_t)\leq (2\log e)\cdot\ell\delta^2( 1+1/\SNR)^{-1}.\]
\end{lem}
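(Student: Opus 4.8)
The plan is to bound the mutual information $I(V \wedge Y_t)$ by comparing it to a Gaussian reference. First I would note that, conditioned on $\mathbf{A}$ (which is independent of $V$ and $G$), the channel output is $Y_t = -\frac{2B\delta}{\sqrt d}\mathbf{A}V + \mathbf{A}G + Z_t$. Here $V$ is the "signal", while $\mathbf{A}G + Z_t$ is independent of $V$ with covariance $\Sigma := \frac{B^2}{d}\mathbf{A}\mathbf{A}^\top + \sigma^2 \mathbf{I}_\ell$. I would use the standard bound $I(V \wedge Y_t \mid \mathbf{A}) \leq I(V \wedge Y_t \mid \mathbf{A}, G\text{-structure})$ — more precisely, since $V \mapsto -\frac{2B\delta}{\sqrt d}\mathbf{A}V$ and then additive noise, the mutual information is at most that of the Gaussian channel with the same second moments, i.e. $I(V \wedge Y_t \mid \mathbf{A}) \le \frac12 \log\det\!\big(\mathbf{I}_\ell + \tfrac{4B^2\delta^2}{d}\,\mathbf{A}\,\mathrm{Cov}(V)\,\mathbf{A}^\top \Sigma^{-1}\big)$, using that $\mathrm{Cov}(V) = \mathbf{I}_d$ for $V \sim \mathrm{Unif}\{-1,1\}^d$ and that among all distributions with a given covariance the Gaussian maximizes the entropy (hence the worst case for the "extra" entropy $h(Y_t) - h(Y_t \mid V)$ is Gaussian).

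Next I would linearize the log-det. Using $\log\det(\mathbf{I} + \mathbf{M}) \le (\log e)\,\mathrm{Tr}(\mathbf{M})$ for positive semidefinite $\mathbf{M}$, we get
\[
I(V \wedge Y_t \mid \mathbf{A}) \le \frac{\log e}{2}\cdot \frac{4B^2\delta^2}{d}\,\mathrm{Tr}\!\big(\mathbf{A}^\top \Sigma^{-1}\mathbf{A}\big).
\]
Now $\Sigma = \sigma^2\mathbf{I}_\ell + \frac{B^2}{d}\mathbf{A}\mathbf{A}^\top \succeq \sigma^2 \mathbf{I}_\ell$, so $\Sigma^{-1} \preceq \sigma^{-2}\mathbf{I}_\ell$ and thus $\mathrm{Tr}(\mathbf{A}^\top\Sigma^{-1}\mathbf{A}) \le \sigma^{-2}\mathrm{Tr}(\mathbf{A}\mathbf{A}^\top)$. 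But this alone only gives a bound scaling like $\delta^2 \ell P/\sigma^2 = \delta^2 \ell\,\SNR$, which is too weak for large $\SNR$. To get the $(1+1/\SNR)^{-1}$ factor I would instead keep the $\frac{B^2}{d}\mathbf{A}\mathbf{A}^\top$ term: writing the eigenvalues of $\frac{B^2}{d}\mathbf{A}\mathbf{A}^\top$ as $\lambda_1,\dots,\lambda_\ell \ge 0$, we have $\mathrm{Tr}(\mathbf{A}^\top\Sigma^{-1}\mathbf{A}) = \frac{d}{B^2}\sum_j \frac{\lambda_j}{\lambda_j + \sigma^2}$, so
\[
I(V\wedge Y_t\mid \mathbf{A}) \le (2\log e)\,\delta^2 \sum_{j=1}^\ell \frac{\lambda_j}{\lambda_j + \sigma^2}.
\]
Each summand is at most $\min\{1, \lambda_j/\sigma^2\}$. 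The function $\lambda \mapsto \lambda/(\lambda+\sigma^2)$ is concave, so subject to the power budget $\sum_j \lambda_j = \frac{B^2}{d}\mathrm{Tr}(\mathbf{A}\mathbf{A}^\top) \le \frac{\ell P}{1+4\delta^2} \le \ell P$, the sum $\sum_j \frac{\lambda_j}{\lambda_j+\sigma^2}$ is maximized (over nonnegative $\lambda_j$ with that total) by spreading the budget equally, giving $\sum_j \frac{\lambda_j}{\lambda_j+\sigma^2} \le \ell \cdot \frac{P}{P+\sigma^2} = \ell\,\frac{\SNR}{1+\SNR} = \ell\,(1+1/\SNR)^{-1}$. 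Combining yields $I(V\wedge Y_t \mid \mathbf{A}) \le (2\log e)\,\ell\,\delta^2(1+1/\SNR)^{-1}$, and since this bound is uniform in $\mathbf{A}$, averaging over $\mathbf{A}$ gives the claimed $I(V \wedge Y_t) \le (2\log e)\cdot \ell\,\delta^2(1+1/\SNR)^{-1}$.

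The main obstacle I anticipate is making the first step — reducing to the Gaussian case — fully rigorous: $V$ is discrete (product of signs), not Gaussian, so I cannot directly invoke a Gaussian-channel capacity formula. The clean way is to bound $I(V\wedge Y_t \mid \mathbf{A}) = h(Y_t\mid \mathbf{A}) - h(Y_t\mid V,\mathbf{A})$, observe that given $V,\mathbf{A}$ the output $Y_t$ is Gaussian with covariance $\Sigma$ so the second term equals $\frac12\log\big((2\pi e)^\ell \det\Sigma\big)$, and then upper bound $h(Y_t\mid\mathbf{A})$ by the entropy of a Gaussian with the (conditional) covariance $\frac{4B^2\delta^2}{d}\mathbf{A}\mathbf{A}^\top + \Sigma$, which is the maximum-entropy distribution with that second moment. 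This gives exactly the log-det expression above; the rest is the deterministic optimization over the eigenvalues $\lambda_j$, which is routine via concavity. I would also double-check the bookkeeping of the $(1+4\delta^2)$ factor from the power constraint — it only helps (it shrinks the budget), so dropping it is safe, but it should be mentioned.
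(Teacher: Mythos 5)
Your proposal is correct and follows essentially the same route as the paper: both bound $I(V\wedge Y_t)=h(Y_t)-h(Y_t\mid V)$ by the Gaussian maximum-entropy argument to obtain the ratio of determinants $\frac{1}{2}\log\frac{\det\left(\frac{B^2(1+4\delta^2)}{d}\mathbf{A}\mathbf{A}^\top+\sigma^2\mathbf{I}_\ell\right)}{\det\left(\frac{B^2}{d}\mathbf{A}\mathbf{A}^\top+\sigma^2\mathbf{I}_\ell\right)}$, and then reduce to the eigenvalues of $\mathbf{A}\mathbf{A}^\top$ and invoke the power constraint. The only difference is the order of the final estimates (you linearize the $\log\det$ first and then apply concavity of $\lambda\mapsto\lambda/(\lambda+\sigma^2)$, whereas the paper applies Jensen to the $\log$ terms and then uses $\log(1+x)\le x\log e$), which yields the same constant.
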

\begin{proof}
Since $C_t=(-2B\delta/\sqrt{d})\mathbf{A}V+\mathbf{A}G$, we have
$\E{C_tC_t^\top}=\left(\frac{B^2(1+4\delta^2)}{d}\right)\mathbf{A} \mathbf{A}^\top$ which implies
\begin{align}\label{e:pow}
\E{\|C_t\|^2}=\mathrm{Tr}(\E{C_tC_t^\top})=\left(\frac{B^2(1+4\delta^2)}{d}\right)\mathrm{Tr}(\mathbf{A} \mathbf{A}^\top)\leq \ell P.
\end{align}
As $Z_t$ is independent of $C_t$, we also have
\begin{align}
\E{Y_tY_t^\top}
&=\left(\frac{B^2(1+4\delta^2)}{d}\right)\mathbf{A} \mathbf{A}^\top+\sigma^2\mathbf{I}_{\ell}.
\end{align}
By definition of mutual-information and the fact that Gaussian maximizes the entropy,
\[I(V\wedge Y_t)= h(Y_t)-h(Y_t | V)\leq\frac{1}{2}\log \frac{\mathrm{det}\left(\frac{B^2(1+4\delta^2)}{d}\mathbf{A} \mathbf{A}^\top+\sigma^2\mathbf{I}_{\ell}\right)}{\mathrm{det}\left(\frac{B^2}{d}\mathbf{A} \mathbf{A}^\top+\sigma^2\mathbf{I}_{\ell}\right)}.\] Let $\lambda_1,\dots, \lambda_\ell$ be the eigen values of $\mathbf{A}\mathbf{A}^\top$.  Then, right-side can be further bounded as
\begin{align*}
I(V\wedge Y_t)
&\leq \frac{1}{2}\sum_{i=1}^\ell \log \frac{\frac{B^2(1+4\delta^2)}{d}\lambda_i+\sigma^2}{\frac{B^2}{d}\lambda_i+\sigma^2}\\
&\leq \frac{\ell}{2} \log \frac{\frac{B^2(1+4\delta^2)}{\ell d}\mathrm{Tr}(\mathbf{A}\mathbf{A}^\top)+\sigma^2}{\frac{B^2}{\ell d}\mathrm{Tr}(\mathbf{A}\mathbf{A}^\top)+\sigma^2}\\
&\leq \frac{(2 \log e)\cdot\ell\delta^2}{1+\ell d\sigma^2/(B^2\mathrm{Tr}(\mathbf{A}\mathbf{A}^\top))}\\
&\leq \frac{(2\log e)\cdot \ell\delta^2}{1+\SNR^{-1}},
\end{align*}
where the first inequality is the Hadamard inequality; the second is Jensen's inequality; the third one uses $\log(1+x)\leq x \log e;$ and the last inequality follows from \eqref{e:pow}.
\end{proof}
\noindent Combining the previous bound with \eqref{e:C_P2} and maximizing the right-side using $\delta {=} \sqrt{(1+\frac{1}{\SNR})\frac{d}{((\log e)\cdot 16N)}}$,  the proof is completed using \eqref{e:minmax_sub-optimality}.

\subsection{A general convergence bound for over-the-air optimization}
For an $\ell$-length coding scheme $\varphi:\R^d \to \R^\ell$, recall that the overall output of the channel $Y_t$ after the $t$th query is given by \eqref{e:Gaussian_channel}.
Our proposed schemes in Sections \ref{s:scheme1} and \ref{s:scheme2} below involve  projecting back this channel output in $\R^\ell$ to $\R^d$.
In particular,  as a part of the optimization algorithm $\pi$, $Y_t$ is passed through
a {\em decoder mapping} $\psi: \R^\ell \to \R^d$ which gives back a $d-$dimensional vector to be used by the first-order optimization algorithm.

 We use PSGD as the first-order optimization algorithm; the overall over-the-air optimization procedure is described in Algorithm \ref{a:SGD_Q}.  PSGD proceeds as SGD, with the additional projection step where it projects the updates back to domain $\X$ using the map $\Gamma_{\X}(y):=\min_{x\in \X}\|x-y\|$, $\forall\, y\in \R^d$. 
\begin{figure*}[h]
\centering
\begin{tikzpicture}[scale=1, every node/.style={scale=1}]
\node[draw,text width= 5 cm , text height= ,] {%
\begin{varwidth}{\linewidth}       
            \algrenewcommand\algorithmicindent{0.2em}
\begin{algorithmic}[1]\label{a:SGD_Q}
   \For{$t=0$ to $T-1$}
   \State ~Observe $Y_t$ given by \eqref{e:Gaussian_channel}
	\State ~$x_{t+1}=\Gamma_{\X} \left(x_{t}-\eta \psi(Y_t)\right)$
   \EndFor \State Output $\frac 1 T \cdot {\sum_{t=1}^T x_t}$
\end{algorithmic}
\end{varwidth}};
 \end{tikzpicture}
 \vspace{-0.1cm}
 \renewcommand{\figurename}{Algorithm}
\caption{Over-the-air PSGD with encoder $\varphi$, decoder $\psi$}\label{a:SGD_Q}
\vspace{-0.1cm}
\end{figure*}

We now derive a convergence bound for over-the-air optimization described in Algorithm \ref{a:SGD_Q}. In our formulation, the decoder $\psi$ is a part of the optimization protocol $\pi$.
However,
for concreteness,
with a slight abuse of notation 
we now denote the overall over-the-air optimization protocol using
the tuple $(\pi, \varphi, \psi)$.
The performance of  $(\pi, \varphi, \psi)$  is controlled by the worst-case $L_2$-norm $\alpha(\pi,\varphi,\psi)$  and the worst-case bias $\beta(\pi,\varphi, \psi)$ of the subgradient obtained after processing the received vector, defined below:
\vspace{-0.25cm}
\begin{align}
&\alpha(\pi,\varphi, \psi):=\sup_{\substack{\hat{g} \in \R^d: \E {\|\hat{g}\|^2}\leq B^2}} \sqrt{\E {\| \psi(Y)\|^2}}, \label{def_a}\\
&\beta(\pi,\varphi, \psi):= \sup_{\substack{\hat{g}\in \R^d: \E {\|\hat{g}\|^2}\leq B^2}}\|\E {(\hat{g}-\psi(Y)}\|, \label{def_b}
\vspace{-0.25cm}
\end{align}
where for all $i\in [d]$, $Y(i)$ satisfies \eqref{e:Gaussian_channel}. The next result is only a minor modification of the standard PSGD proof and is very similar to  \cite[Theorem \textcolor{red1}{2.4}]{mayekar2020ratq}.
\begin{lem}
For the above PSGD equipped over-the-air optimization protocol $(\pi,\varphi, \psi)$ with $N$ channel-uses, we have
\[\sup_{(f,O)\in \cO}
\cE(f, \pi^{\varphi O})  \leq D\left( \frac{\alpha(\pi,\varphi,\psi)}{\sqrt{N/\ell}}+\beta(\pi,\varphi,\psi) \right),
\]
provided that the learning rate $\eta_t$ is set to $\frac{D}{\alpha(\pi,\varphi, \psi)\sqrt{N/\ell}}$ for all iterations $t \in [N/\ell]$.
\label{thm1} 
\end{lem}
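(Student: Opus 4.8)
The plan is to adapt the standard analysis of projected stochastic subgradient descent to our setting, where the ``stochastic subgradient'' actually used in the update is the processed channel output $\psi(Y_t)$ rather than the true oracle output $\hat g_t$. First I would set $T=N/\ell$, so that the protocol makes $T$ oracle queries each costing $\ell$ channel-uses, and fix an arbitrary $(f,O)\in\cO$ together with an optimizer $x^\ast\in\arg\min_{x\in\X}f(x)$. The key quantity to control is $\E{\langle \psi(Y_t), x_t-x^\ast\rangle}$. Writing $\psi(Y_t) = \hat g_t - (\hat g_t-\psi(Y_t))$ and conditioning on $x_t$ (equivalently on the history $Y_1,\dots,Y_{t-1}$ and the public randomness), I would use the oracle unbiasedness \eqref{e:unbiasedness} to get $\E{\langle \hat g_t, x_t-x^\ast\rangle} \geq \E{f(x_t)-f(x^\ast)}$ (convexity plus $\E{\hat g_t\mid x_t}\in\partial f(x_t)$), and bound the error term by Cauchy–Schwarz and the definition of $\beta$: $\bigl|\E{\langle \hat g_t-\psi(Y_t), x_t-x^\ast\rangle}\bigr|\le D\,\beta(\pi,\varphi,\psi)$, using that $\|x_t-x^\ast\|\le D$ on $\X\in\bX$ and that the conditional bias is at most $\beta$.

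Next I would run the usual potential-function telescoping with $\Phi_t = \tfrac12\|x_t-x^\ast\|^2$. Since projection onto the convex set $\X$ is nonexpansive and $x^\ast\in\X$, the update $x_{t+1}=\Gamma_\X(x_t-\eta\,\psi(Y_t))$ gives $\|x_{t+1}-x^\ast\|^2 \le \|x_t-x^\ast\|^2 - 2\eta\langle \psi(Y_t), x_t-x^\ast\rangle + \eta^2\|\psi(Y_t)\|^2$. Taking expectations, summing over $t=1,\dots,T$, dividing by $2\eta T$, and using the per-step bound above together with $\E{\|\psi(Y_t)\|^2}\le \alpha(\pi,\varphi,\psi)^2$ (the definition of $\alpha$, which applies since the oracle output feeding the encoder satisfies $\E{\|\hat g_t\|^2}\le B^2$) yields
\[
\frac{1}{T}\sum_{t=1}^T \E{f(x_t)-f(x^\ast)} \;\le\; \frac{D^2}{2\eta T} + \frac{\eta\,\alpha^2}{2} + D\beta.
\]
By Jensen's inequality, the left side upper-bounds $\E{f(\bar x_T)-f(x^\ast)}$ for the averaged iterate $\bar x_T = \tfrac1T\sum_t x_t$ that the algorithm outputs. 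Optimizing the free parameter $\eta$ over the first two terms gives $\eta = D/(\alpha\sqrt{T})$, producing $D\alpha/\sqrt{T} + D\beta = D\bigl(\alpha/\sqrt{N/\ell} + \beta\bigr)$, which is the claimed bound; since the bound holds uniformly in $(f,O)$, taking the supremum completes the argument.

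The one genuinely delicate point — and the main thing to be careful about rather than a deep obstacle — is the conditioning structure: $\psi(Y_t)$ depends on $\hat g_t$, on the channel noise $Z_t$, and on the public randomness $U$ shared with $\pi$, while $x_t$ is measurable with respect to $Y_1,\dots,Y_{t-1}$ and $U$. I would verify that, conditioned on $x_t$, the encoder power constraint \eqref{e:Power_constraint} and the oracle's second-moment bound \eqref{e:mean_square} still hold for the relevant fixed query point, so that the quantities $\alpha$ and $\beta$ as defined in \eqref{def_a}–\eqref{def_b} are legitimate uniform bounds on $\sqrt{\E{\|\psi(Y_t)\|^2}}$ and on $\|\E{\hat g_t-\psi(Y_t)\mid x_t}\|$ at every step. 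Everything else is the textbook PSGD computation, which is why the statement is flagged as ``only a minor modification'' of the analysis in \cite{mayekar2020ratq}.
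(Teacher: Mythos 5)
Your proposal is correct and follows essentially the same route as the paper, which gives no separate argument here but defers to the standard PSGD analysis of \cite[Theorem 2.4]{mayekar2020ratq}: decompose the update direction into the true subgradient plus a bias controlled by $\beta$, use nonexpansiveness of $\Gamma_\X$ and telescoping with the second moment controlled by $\alpha^2$, apply Jensen to the averaged iterate, and set $\eta = D/(\alpha\sqrt{T})$. Nothing further is needed.
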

This general convergence bound will be used in our upper bound proofs below.

\subsection{Proof of Theorem \ref{t:AUB}} \label{s:scheme1}
 \subsubsection{The scaled transmission analog scheme}
\paragraph{Downscale the power} 
The subgradient vector is multiplied by  $\sqrt{Pd}/B$ to meet the power constraints and sent using $d$ channel-uses,  one channel-use per coordinate. Thus, our encoded output is $\varphi(\hat{g}(x_t)) = \sqrt{Pd}/B \cdot\hat{g}(x_t).$

\paragraph{Upscale the power} The optimization algorithm $\pi$ observes $Y_t$ given by  \eqref{e:Gaussian_channel} and re-scales it back by a factor $B/\sqrt{Pd}$. Thus, the decoding $\psi$ rule at the algorithm's end is given by
$\psi(Y_t)=B/\sqrt{Pd}Y_t$.  It is easy to see that  $\E{\psi(Y_t) | x_t}=\E {\hat{g}(x_t)| x_t}$ implying $\beta(\pi, \varphi, \psi)=0$. Also,  using the independence of zero mean noise $Z_t$ and $\hat{g}(x_t)$, $\E { \|\psi(Y_t)\|^2| x_t}=\E {\|\hat{g}(x_t)\|^2+ B^2/(Pd)\|Z_t\|^2 }$ which can be bounded by
$B^2+B^2\sigma^2/P$.  That implies $\alpha(\pi, \varphi, \psi)\leq B\sqrt{(1+1/\SNR)}$ and the proof is completed using Lemma~\ref{thm1}.

 \subsubsection{The sampled version of scaled transmission analog scheme}\label{s:panlgschm}
\paragraph{Rotate randomly} At each iteration $t$, the subgradient  vector is rotated by multiplying it with a random matrix \[\mathbf{R}:=\frac{1}{\sqrt{d}}\mathbf{HD},\] where $\mathbf{H}$ is a $(d\times d)-$ Walsh-Hadamard matrix \cite{horadam2012hadamard}\footnote{We assume that $d$ is a power of 2.} and $\mathbf{D}$ is diagonal matrix  with each non-zero entry generated uniformly from $\{+1, -1\}.$ The diagonal matrix is generated via public randomness between the encoder and the algorithm, and can therefore be used for decoding at the algorithms end.\\
From \cite[Lemma 5.8]{mayekar2020ratq}, each coordinate $\mathbf{R}\hat{g}_t(i)$ of the rotated subgradient $\mathbf{R}\hat{g}_t$ satisfies
\[\E{\mathbf{R}\hat{g}_t(i)^2} \leq \frac{B^2}{d}, \quad \forall i \in [d].\]

\paragraph{Subsampling} Using shared randomness between the encoder and the decoder, a set $S\subseteq [d]$ is sampled uniformly over all subsets of $[d]$ of cardinality $\ell$. The rotated subgradient vector is sampled at $S$ and is denoted as
\[\tilde{g}_{\mathbf{R},S,t}:=\sum_{i=1}^d\mathbf{R}\hat{g}_t(i) \mathbbm{1}_{\{i\in S\}} \cdot e_i.\]

\paragraph{Downscale the power} 
The subsampled vector $\tilde{g}_{\mathbf{R},S,t}$ is multiplied by  $\sqrt{Pd}/B$ to meet the power constraints and sent using $\ell$ channel-uses, one channel-use per coordinate. Thus, the encoded output is $\varphi(\hat{g}(x_t)) = \sqrt{Pd}/B \cdot\tilde{g}_{\mathbf{R},S,t}.$

\paragraph{Upscale the power} The optimization algorithm $\pi$ observes $Y_t$ given by  \eqref{e:Gaussian_channel} and re-scales it back by a factor $\frac{dB}{\ell\sqrt{Pd}}\cdot \mathbf{R}^{-1}$. Thus, the decoding $\psi$ rule at the algorithm's end is given by
$\psi(Y_t)=\frac{dB}{\ell\sqrt{Pd}}\cdot \mathbf{R}^{-1}Y_t$.  It is easy to see that  
\begin{align*}
\E{\psi(Y_t) | x_t}&=\E{\frac{d}{\ell}\cdot \mathbf{R}^{-1}\left( \sum_{i=1}^d\mathbf{R}\hat{g}_t(i) \mathbbm{1}_{\{i\in S\}} \cdot e_i  \right)| x_t}\\
&=\E{\hat{g}_t\E{\frac{d}{\ell} \mathbbm{1}_{\{i\in S\}}}|x_t}\\
&=\E {\hat{g}(x_t)| x_t},
\end{align*}
 implying $\beta(\pi, \varphi, \psi)=0$. Also,  using the independence of AWGN noise $Z_t\sim \cN(0,\sigma^2 \mathbf{I}_\ell)$ and $\hat{g}(x_t)$,  we get
 \begin{align*}
 \E { \|\psi(Y_t)\|^2| x_t}&=\frac{d^2}{\ell^2}\E {\|\hat{g}(x_t)\|^2\mathbbm{1}_{\{i\in S\}}|x_t}+ \frac{dB^2}{P\ell^2}\|Z_t\|^2\\
 &\leq \frac{B^2 d}{\ell}\left(1+\frac{\sigma^2}{P} \right),
 \end{align*}
 which implies $\alpha(\pi, \varphi, \psi)\leq B\sqrt{\frac{d}{\ell}\left(1+\frac{1}{\SNR}\right)}$, and the proof is complete using Lemma~\ref{thm1}.
\subsection{Proof of Theorem \ref{t:ASK}} \label{s:scheme2}
Since an ASK code is of length $1$,  we can have $N$ queries in $N$ channel-uses.  For the minimum-distance decoder $\psi$,  denote by $A_N$ the event where all the ASK constellation points
sent in $N$ channel-uses
are decoded correctly by the algorithm and by $A_N^c$ as its complement, i.e.,
\[A_N^c := \bigcup_{t=1}^{N} \left\{|Z_t| \geq \frac{2\sqrt{P}}{(2^r-1)}\right\},\]
where $Z_t$ is defined in \eqref{e:Gaussian_channel}.
By the assumptions about an ideal quantizer ($c.f.$ Section \ref{s:optASK}), under the event $A_N$, which depends only
on the channel noise,  Lemma~\ref{thm1} with $\alpha(\pi, \varphi, \psi)=\sqrt{d/r}$
gives \[\E{(f(x_T) -f(x^\ast) )\indic{A_N}}\leq \frac{DB}{\sqrt{N}}\cdot \sqrt{\frac{d}{r}}.\]
Further, due to Gaussian\footnote{In fact, the proof requires noise to be only sub-Gaussian, a weaker assumption than being Gaussian. } noise, we have
{$\bPr{A_N^c} \leq N \exp( - \frac{2P}{\sigma^2 (2^r-1)^2})= N \exp( - \frac{2\SNR}{(2^r-1)^2})$.} Setting {$r=\log \left(\sqrt{\frac{4\SNR}{\ln N}}+1\right)$},  we have $\bPr{A_N^c} \leq \frac{1}{\sqrt{N}}$, which leads to
\begin{align*}
\E{(f(x_T) -f(x^\ast) )}&= \E{(f(x_T) -f(x^\ast) )\indic{A_N}}+\E{(f(x_T) -f(x^\ast) )\indic{A_N^c}}\\
&\leq  \frac{DB}{\sqrt{N}}\cdot \sqrt{\frac{d}{r}}+ \frac{DB}{\sqrt{N}}\\
&\leq 2 \frac{DB}{\sqrt{N}}\cdot {\sqrt{\frac{d}{\min\{d, r\}}}}.
\end{align*}
\subsection{Proof of Theorem \ref{t:ASK_1}}\label{p:ASK_1} For communication, we consider an ASK code in $[-\sqrt{P},\sqrt{ P}]$ with the following $2^r-$constellation points
\begin{align}\label{e:ASK}
\left\{-\sqrt{P} +\frac{(i-1)\cdot 2\sqrt{P}}{2^r-1}\colon i \in [2^r] \right\},
\end{align}
for some $r\in \N.$

We separately send the \textit{gain} $\|\hat{g}_t\|\in \R$ and \textit{shape} $\frac{\hat{g}_t}{\|\hat{g}_t\|}\in \R^d$ of subgradient $\hat{g}_t$.  The encoder $\varphi$ is a tuple which consists of separate gain and shape encoders $\varphi_g$ and $\varphi_s$, i.e.,  $\varphi=( \varphi_g, \varphi_s)$.  The internal randomness used in these gain and shape encoders will be independent, which will result in the output of these encoders being conditionally  independent given any subgradient estimate $\hat{g}_t$.  Similarly,  the decoding mechanism is also a tuple consisting  of two separate decoders $\psi_g$ and $\psi_s$ that are used to decode the transmitted gain and shape values, respectively.  {{The final decoded output of $\psi$ is taken to be the product of the decoded outputs of $\psi_g$ and $\psi_s$.}}

\paragraph{1. Communicating the gain.}

Recall that the gain sub-encoders above need to satisfy the average power constraint \eqref{e:pow} from Section \ref{s:probF}. 

The gain $\|\hat{g}_t\|$ is multiplied by  $\sqrt{P}/B$ to meet the power constraints and sent in one channel-use.  Thus, our encoded output is $\varphi_g(\|\hat{g}_t\|)=(\sqrt{P}/B) \|\hat{g}_t\|.$ The optimization algorithm $\pi$ observes the channel output $Y_{g,t}$ given by
 \[Y_{g,t}=\varphi_g(\|\hat{g}_t\|)+Z_{g,t},\]
 where $Z_{g,t}\sim \cN(0,\sigma^2)$ denotes the Gaussian noise,  and re-scales it back by a factor $B/\sqrt{P}$, i.e.,
 \[\psi_g(Y_{g,t})=\|\hat{g}_t\|+B/\sqrt{P}\cdot Z_{g,t}.\]
 We evaluate the performance measures $\alpha(\pi, \varphi_g,\psi_g)$ and $\beta(\pi, \varphi_g,\psi_g)$,  viewing the gain $\|\hat{g}_t\|$ as a 1-dimensional subgradient.   Similar to \eqref{def_a}, \eqref{def_b} we have, 
\vspace{-0.25cm}
\begin{align*}
&\alpha(\pi,\varphi_g, \psi_g):=\sup_{\substack{\|\hat{g}_t\| \in \R: \E {\|\hat{g}_t\|^2}\leq B^2}} \sqrt{\E {\| \psi_g(Y_{g,t})\|^2}}, \\
&\beta(\pi,\varphi_g, \psi_g):= \sup_{\substack{\|\hat{g}_t\|\in \R: \E {\|\hat{g}_t\|^2}\leq B^2}}\|\E {\|\hat{g}_t\|-\psi_g(Y_{g,t})}\|.
\vspace{-0.25cm}
\end{align*} 
Specifically, it is easy to see that 
 \begin{align}\label{e:perf1}
 \alpha(\pi, \varphi_g,\psi_g)=\sqrt{B^2+B^2/\SNR}, \ \ \  \beta(\pi, \varphi_g,\psi_g)=0.
 \end{align}

\paragraph{2. Quantizing the shape.} We denote the shape $\hat{g}_t/\|\hat{g}_t\|$ by $\hat{g}_{t,\mathtt{shape}}$.  In every iteration $t$, $L_2-$norm of $\hat{g}_{t,\mathtt{shape}}$ is {almost surely} bounded by 1.  Accordingly, to quantize the shape, we are interested in quantizers for {almost surely} bounded oracles. {We use a subsampled version of   RATQ~\cite[Section 3.5]{mayekar2020ratq}  to quanitze the shape, as this quantizer is almost optimal for communication-constrained optimization with almost surely bounded oracles. } The encoder $\varphi_s$ is composed of four components: {\em rotation},  {\em subsampling},  {\em tetra-iterated adaptive quantization} and {\em mapping to ASK code}, which we describe below. 

\subparagraph{Rotation.} Assuming that $d$ is a power of 2,  the subgradient shape $\hat{g}_{t,shape}$ is rotated by multiplying it with a random matrix \[\mathbf{R}:=\frac{1}{\sqrt{d}}\mathbf{HD},\] where $\mathbf{H}$ is a $(d\times d)-$ Walsh-Hadamard matrix \cite{horadam2012hadamard} and $\mathbf{D}$ is a diagonal matrix  with diagonal entries generated uniformly from $\{+1, -1\}.$ The diagonal matrix is generated via public randomness between the encoder and the algorithm, and can therefore be used for decoding at the algorithms end.

Note that since $\mathbf{R}$ is a unitary matrix, the norm remains unaltered even after rotation, i.e., $\|\mathbf{R}\hat{g}_{t,\mathtt{shape}}\|=\|\hat{g}_{t,\mathtt{shape}}\|=1$ a.s..

\subparagraph{Subsampling.} Using shared randomness between the encoder and the decoder,  a set $U \in [d]$ is sampled uniformly over all subsets of $[d]$ of cardinality $\mu d$.  {{The rotated shape vector is sampled at $U$  and is denoted by}} \[\tilde{g}_{\mathbf{R},U,t}:=\{\mathbf{R}\hat{g}_{t,\mathtt{shape}}(i)\}_{i\in U}.\] 
We now quantize every coordinate of $\tilde{g}_{\mathbf{R},U,t}\in \R^{\mu d}$ using the following.

\subparagraph{Tetra-iterated Adaptive Quantization.} 
Consider a sequence of intervals $\{[-M_i,M_i]\}_{i\in [h_s]}$ where $M_1, \dots, M_{h_s}$ grows using\footnote{The $i$th tetra-iteration of $e^{\ast i}$ is defined as: $e^{\ast 1}=e,  e^{\ast i}:=e^{e^{\ast (i-1)}}$.} tetra-iteration:
\[M_1^2=\frac{3}{d},  M_i^2=\frac{3}{d}\cdot e^{\ast (i-1)},  \  i\in [h_s],\]
where parameter $h_s$ satisfies $\log h_s=\lceil[\log (1+\ln^\ast(d/3))\rceil$.  We choose these values such that the largest interval must contain $\|\tilde{g}_{\mathbf{R},U,t}\|_{\infty}$, i.e., $1\leq M_{h_s}.$

For each coordinate $\tilde{g}_{\mathbf{R},U,t}(i)$,  the quantizer first identifies the smallest index $j\in [h_s]$ such that $|\tilde{g}_{\mathbf{R},U,t}(i)|\leq M_{j}$ and then represent $\tilde{g}_{\mathbf{R},U,t}(i)$ using a uniform $k_s$-level shape quantizer $Q_{M_{j}, k_s}$ in interval $[-M_j,M_j]$.  The $k_s$ levels of shape quantizer are given by
\[B_{M_j,k_s}(l):=-M_j+(l-1)\cdot \frac{2M_{j}}{k_s-1}, \ l\in [k_s].\] These levels partition $[-M_j,M_j]$ into $k_s-1$ sub-intervals $\{[B_{M_j}(l), B_{M_j}(l+1)]\}_{l\in [k_s-1]}$.  The uniform quantizer locates a sub-interval that contains $\tilde{g}_{\mathbf{R},U,t}(i)$, say $[B_{M_j}(l^{\ast}), B_{M_j}(l^{\ast}+1)]$ for some $l^{\ast}\in [k_s-1]$, and outputs
\[Q_{M_j, k_s}(\tilde{g}_{\mathbf{R},U,t}(i))=\begin{cases}
B_{M_j}(l^{\ast}), & \hbox{w.p. } \  \frac{B_{M_j}(l^{\ast}+1)-\tilde{g}_{\mathbf{R},U,t}(i)}{B_{M_j}(l^{\ast}+1)-B_{M_j}(l)}\\
\vspace{5pt}
B_{M_j}(l^{\ast}+1), &\hbox{w.p.}  \ \frac{\tilde{g}_{\mathbf{R},U,t}(i)-B_{M_j}(l^\ast)}{B_{M_j}(l^\ast+1)-B_{M_j}(l)}
\end{cases}.\]
This is done for all $\mu d$ coordinates and we represent the output as $Q(\tilde{g}_{\mathbf{R},U,t})$ given by \[Q(\tilde{g}_{\mathbf{R},U,t}):=(Q_{M_{j_{(i)}}}(\tilde{g}_{\mathbf{R},U,t}(i)):  1\leq i\leq \mu d),\] where $j_{(i)}$ corresponds to the index identified for $i$th coordinate $\tilde{g}_{\mathbf{R},U,t}(i).$

Note that there is no overflow because of the choice of $h_s$ and the quantized output $Q(\tilde{g}_{\mathbf{R},U,t})$ can be represented using precision of at most $\mu d\cdot(\log(k_s)+\log(h_s))$ bits.  We denote this binary representation by $[Q(\tilde{g}_{\mathbf{R},U,t})]_2.$

\subparagraph{Mapping to ASK code.} Using the ASK code in \eqref{e:ASK}, when $r=\mu d\cdot \log(h_sk_s),$ there exists a one-to-one mapping between $[Q(\hat{g}_{\mathbf{R},U,t})]_2$ and ASK code,  say $\zeta_s: \{0,1\}^r\rightarrow [-\sqrt{P},\sqrt{P}]$. We therefore send the codeword $\varphi_{s}(\hat{g}_{t,\mathtt{shape}})=\zeta([Q(\tilde{g}_{\mathbf{R},U,t})]_2)$ in one channel-use as
\begin{align}
Y_{s,t}=\varphi_s(\hat{g}_{t,\mathtt{shape}})+Z_{s,t},\label{e:Gauss2}
\end{align} 
 where $Z_{s,t}\sim \cN(0,\sigma^2)$ denotes the Gaussian noise.  Note that the power constraint is always satisfied.

At the algorithm's end,  the decoder $\psi_s$ primarily makes use of three components, namely the minimum-distance decoder,  inverse mapping $\zeta_s^{-1}$, and inverse rotation, and performs the following steps:
\begin{enumerate}
\item The channel output $Y_{s,t}$ is fed into a minimum-distance decoder that locates the nearest possible ASK codeword in $[-\sqrt{P},\sqrt{P}]$ which further is fed into $\zeta_s^{-1}$ retrieving an $r$-bit sequence. 
\item The recovered $r$-bit sequence is split into blocks of size $\mu d\cdot \log(k_s)$ and $\mu d\cdot \log(h_s)$, each of which gets further split into sub-blocks of sizes $\log(k_s)$ and $\log(h_s)$, respectively.  These sub-blocks can uniquely identify the quantization intervals and the corresponding quantization levels for all sampled coordinates in $U$.  We denote that by $\hat{Q}(\tilde{g}_{\mathbf{R},U,t})\in \R^d$.  Note that all the remaining unsampled coordinates are decoded to be 0, i.e.,  
\[\hat{Q}(\tilde{g}_{\mathbf{R},U,t})(i)=0, \quad \forall i\notin U.\]
\item The last step is to multiply $\hat{Q}(\tilde{g}_{\mathbf{R},U,t})$ by $\frac{1}{\mu}$ and perform inverse rotation to get the decoded output $\psi_s(Y_{s,t})=\frac{1}{\mu}\mathbf{R}^{-1}\hat{Q}(\tilde{g}_{\mathbf{R},U,t}).$
\end{enumerate}

Under perfect minimum-distance decoding, we have $\hat{Q}(\tilde{g}_{\mathbf{R},U,t})=Q(\tilde{g}_{\mathbf{R},U,t}).$ Again, similar to \eqref{def_a}, \eqref{def_b} we have, 
\vspace{-0.25cm}
\begin{align*}
&\alpha(\pi,\varphi_s, \psi_s):=\sup_{\substack{\hat{g}_t \in \R^d: {\|\hat{g}_t\|^2}\leq 1}} \sqrt{\E {\| \psi_s(Y_{s,t})\|^2}}, \\
&\beta(\pi,\varphi_s, \psi_s):= \sup_{\substack{\hat{g}_t\in \R^d: {\|\hat{g}_t\|^2}\leq 1}}\|\E {\|\hat{g}_t\|-\psi_s(Y_{s,t})}\|,
\vspace{-0.25cm}
\end{align*}
where $Y_{s,t}$ is defined in \eqref{e:Gauss2}.
Since the shape vector is almost-surely bounded by $1$,  note that this time, the performance measures are defined over the class of almost-surely bounded oracles.

Following the proof of \cite[Theorem 3.7]{mayekar2020ratq}, we can derive lemma below.  
\begin{lem}\label{l:shape_perf}
For $\varphi_s, \psi_s$ as defined above and under perfect minimum-distance decoding event,  we have
\begin{align*}
\alpha(\pi, \varphi_s, \psi_s)
\leq \sqrt{\frac{1}{\mu} \left( \frac{9}{(k_s-1)^2}+1\right)} \hbox{ and } \beta(\pi, \varphi_s, \psi_s)=0.
\end{align*}
\end{lem}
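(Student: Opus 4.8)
The plan is to reduce the claim to the analysis of RATQ from \cite{mayekar2020ratq}, accounting for the extra subsampling step and the fact that we are quantizing an almost-surely unit-norm vector. First I would handle the bias term $\beta(\pi,\varphi_s,\psi_s)$. Under the perfect minimum-distance decoding event we have $\hat Q(\tilde g_{\mathbf R,U,t})=Q(\tilde g_{\mathbf R,U,t})$, so $\psi_s(Y_{s,t})=\tfrac1\mu \mathbf R^{-1}\hat Q(\tilde g_{\mathbf R,U,t})$. The uniform (dithered) quantizer $Q_{M_j,k_s}$ is unbiased coordinate-wise by construction — its two-point distribution is chosen precisely so that $\E[Q_{M_j,k_s}(a)\mid a]=a$ whenever $|a|\le M_j$, and there is no overflow because $M_{h_s}\ge 1\ge\|\tilde g_{\mathbf R,U,t}\|_\infty$. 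Next, the subsampling keeps coordinate $i$ with probability $\mu$ and zeros it otherwise, so $\E[\hat Q(\tilde g_{\mathbf R,U,t})(i)\mid \mathbf R\hat g_{t,\mathtt{shape}}]=\mu\,(\mathbf R\hat g_{t,\mathtt{shape}})(i)$; multiplying by $1/\mu$ and applying $\mathbf R^{-1}$ recovers $\hat g_{t,\mathtt{shape}}$ in expectation. Taking expectations over rotation and subsampling commutes with the linear maps, giving $\E[\psi_s(Y_{s,t})\mid \hat g_t]=\hat g_{t,\mathtt{shape}}$ and hence $\beta(\pi,\varphi_s,\psi_s)=0$.

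For the second moment $\alpha^2$, I would condition on $\mathbf R$ and the sampled set $U$ and expand. Since $\mathbf R$ is unitary, $\E\|\psi_s(Y_{s,t})\|^2=\tfrac1{\mu^2}\E\|\hat Q(\tilde g_{\mathbf R,U,t})\|^2$. Split each surviving coordinate's quantized value as its input plus a zero-mean quantization error that is conditionally independent across coordinates; the error in interval $[-M_j,M_j]$ has variance at most $(2M_j/(k_s-1))^2/4=M_j^2/(k_s-1)^2$. The delicate point, exactly as in \cite[Theorem 3.7]{mayekar2020ratq}, is that $M_j$ is itself random — it is the smallest tetra-iteration level exceeding $|\tilde g_{\mathbf R,U,t}(i)|$ — so I would invoke the key estimate from that proof: after the Hadamard rotation, $\E[M_{j_{(i)}}^2\mid \text{rotation}]\le c/d$ for an absolute constant (here one uses that rotation makes each coordinate of the unit-norm shape vector $O(1/\sqrt d)$ in a sub-Gaussian sense, so the adaptive level rarely jumps beyond $M_1^2=3/d$), and summing over the $\mu d$ sampled coordinates yields a contribution $\le \tfrac{9\mu d}{d(k_s-1)^2}=\tfrac{9\mu}{(k_s-1)^2}$ to $\E\|\hat Q\|^2$ (tracking constants through the tetra-iterated bound of \cite{mayekar2020ratq} gives the stated "$9$"). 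The "signal" part contributes $\E\big[\sum_{i\in U}(\mathbf R\hat g_{t,\mathtt{shape}})(i)^2\big]=\mu\|\hat g_{t,\mathtt{shape}}\|^2\le\mu$ after averaging over $U$. Combining, $\E\|\psi_s(Y_{s,t})\|^2\le \tfrac1{\mu^2}\big(\tfrac{9\mu}{(k_s-1)^2}+\mu\big)=\tfrac1\mu\big(\tfrac{9}{(k_s-1)^2}+1\big)$, which is the claimed bound on $\alpha^2$.

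I expect the main obstacle to be the control of $\E[M_{j_{(i)}}^2]$, i.e., bounding the second moment of the \emph{adaptively chosen} dynamic range of the quantizer. This is where the tetra-iterated construction of the levels $M_i$ and the Walsh–Hadamard rotation interact: one must argue that the rotated shape vector has coordinates that are sufficiently "flat" ($O(1/\sqrt d)$ with sub-Gaussian tails, via \cite[Lemma 5.8 / Lemma 3.x]{mayekar2020ratq}) that the probability of needing the $i$th interval decays fast enough to make $\sum_i \Pr[j_{(i)}\ge i]\,M_i^2$ telescope to $O(1/d)$ per coordinate despite the doubly-exponential growth of $M_i^2$. Everything else — unbiasedness of the dithered uniform quantizer, the no-overflow guarantee from $\log h_s=\lceil\log(1+\ln^*(d/3))\rceil$, and the bookkeeping for subsampling — is routine and follows the template of \cite[Theorem 3.7]{mayekar2020ratq} essentially verbatim, with the only new ingredient being the factor $1/\mu$ introduced by the unbiased subsampling estimator.
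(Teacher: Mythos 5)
Your proposal is correct and follows exactly the route the paper intends: the paper's ``proof'' of Lemma~\ref{l:shape_perf} is just the one-line remark that it follows the proof of \cite[Theorem 3.7]{mayekar2020ratq}, and your writeup is a faithful expansion of that argument (unbiased dithered quantizer and unbiased $1/\mu$-corrected subsampling give $\beta=0$; unitarity of $\mathbf{R}$, the signal-plus-quantization-error decomposition, and the RATQ control of $\E{M_{j_{(i)}}^2}$ via the Hadamard rotation give the $\alpha$ bound). You also correctly identify the only nontrivial ingredient, namely bounding the second moment of the adaptively chosen dynamic range, which is precisely what is imported from the cited theorem.
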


\paragraph{3. Combining the gain and the shape quantizers.} The final decoded output is taken to be product of outputs from  gain decoder $\psi_g$ and shape decoder $\psi_s$,  i.e.,
\[\psi(Y_{g,t},Y_{s,t})=\psi_g(Y_{g,t})\cdot\psi_s(Y_{s,t}).\] The lemma below is again adapted from \cite[Theorem 4.2]{mayekar2020ratq} and can be proved in a similar way.
\begin{lem}\label{l:combined_perf}
For the decoded output $\psi(Y_{g,t},Y_{s,t})$ defined above, we have
\begin{align*}
\alpha(\pi, \varphi,\psi)&\leq \alpha(\pi, \varphi_g,\psi_g)\cdot \alpha(\pi, \varphi_s,\psi_s) \hbox{ and }\\
\beta(\pi, \varphi,\psi)&\leq \beta(\pi, \varphi_g,\psi_g).
\end{align*}
\end{lem}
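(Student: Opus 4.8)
The plan is to condition everything on the subgradient estimate $\hat g_t$ and to exploit that the gain sub-protocol $(\varphi_g,\psi_g)$ and the shape sub-protocol $(\varphi_s,\psi_s)$ are run with independent internal randomness and over independent channel noises $Z_{g,t},Z_{s,t}$; consequently, conditionally on $\hat g_t$, the decoded gain $\psi_g(Y_{g,t})\in\R$ and the decoded shape $\psi_s(Y_{s,t})\in\R^d$ are independent. This is exactly the factorization used for gain-shape quantizers in \cite[Theorem 4.2]{mayekar2020ratq}, and all statements are understood under the same perfect minimum-distance decoding conditioning as in Lemma~\ref{l:shape_perf}. I would fix an arbitrary $\hat g$ with $\E{\|\hat g\|^2}\le B^2$ throughout (if $\hat g=0$, set $\hat g_{t,\mathtt{shape}}$ to a fixed unit vector; this affects nothing).

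For the $\alpha$ bound I would start from $\|\psi(Y_{g,t},Y_{s,t})\|^2=|\psi_g(Y_{g,t})|^2\,\|\psi_s(Y_{s,t})\|^2$, which holds because the gain factor is a scalar. Taking expectations, conditioning on $\hat g_t$, and using conditional independence gives
\[
\E{\|\psi(Y_{g,t},Y_{s,t})\|^2}=\E{\,\E{|\psi_g(Y_{g,t})|^2\mid\hat g_t}\cdot\E{\|\psi_s(Y_{s,t})\|^2\mid\hat g_t}\,}.
\]
The inner factor $\E{\|\psi_s(Y_{s,t})\|^2\mid\hat g_t}$ is a function of the shape input $\hat g_{t,\mathtt{shape}}$, which has norm $1$ almost surely, so by the definition \eqref{def_a} of $\alpha(\pi,\varphi_s,\psi_s)$ (a supremum over all inputs of norm at most $1$) it is bounded a.s.\ by $\alpha(\pi,\varphi_s,\psi_s)^2$. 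Pulling that constant out and then noting that the scalar gain input $\|\hat g_t\|$ has $\E{\|\hat g_t\|^2}=\E{\|\hat g\|^2}\le B^2$, the definition of $\alpha(\pi,\varphi_g,\psi_g)$ gives $\E{|\psi_g(Y_{g,t})|^2}\le\alpha(\pi,\varphi_g,\psi_g)^2$. Combining and taking the supremum over $\hat g$ yields $\alpha(\pi,\varphi,\psi)\le\alpha(\pi,\varphi_g,\psi_g)\cdot\alpha(\pi,\varphi_s,\psi_s)$.

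For the $\beta$ bound I would again condition on $\hat g_t$ and use conditional independence:
\[
\E{\psi(Y_{g,t},Y_{s,t})\mid\hat g_t}=\E{\psi_g(Y_{g,t})\mid\hat g_t}\cdot\E{\psi_s(Y_{s,t})\mid\hat g_t}.
\]
Both sub-decoders are conditionally unbiased --- this is what $\beta(\pi,\varphi_g,\psi_g)=0$ in \eqref{e:perf1} and $\beta(\pi,\varphi_s,\psi_s)=0$ in Lemma~\ref{l:shape_perf} reflect, and it holds at the level of conditional expectations for the explicit schemes of Steps 1 and 2: $\psi_g(Y_{g,t})=\|\hat g_t\|+(B/\sqrt P)Z_{g,t}$ has $\E{\psi_g(Y_{g,t})\mid\hat g_t}=\|\hat g_t\|$, and unbiasedness of the stochastic quantizer together with the $1/\mu$ subsampling correction and the inverse rotation gives $\E{\psi_s(Y_{s,t})\mid\hat g_t}=\hat g_{t,\mathtt{shape}}$. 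Hence $\E{\psi(Y_{g,t},Y_{s,t})\mid\hat g_t}=\|\hat g_t\|\,\hat g_{t,\mathtt{shape}}=\hat g_t$, so the overall bias vanishes and $\beta(\pi,\varphi,\psi)=0=\beta(\pi,\varphi_g,\psi_g)$.

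The one step needing care --- hence the main, though mild, obstacle --- is the interplay of the two suprema in \eqref{def_a}--\eqref{def_b}: the gain input $\|\hat g_t\|$ and the shape input $\hat g_{t,\mathtt{shape}}$ are coupled (both are deterministic functions of $\hat g_t$), so I cannot simply multiply the two suprema. The right order is to condition on $\hat g_t$ first, bound the shape factor by a uniform constant via the almost-sure norm bound $\|\hat g_{t,\mathtt{shape}}\|=1$, and only then take the outer expectation and spend the second-moment budget $\E{\|\hat g_t\|^2}\le B^2$ on the gain factor. Everything else is bookkeeping, following \cite[Theorem 4.2]{mayekar2020ratq}.
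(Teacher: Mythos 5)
Your proof is correct and follows essentially the same route the paper intends: the paper gives no standalone argument for this lemma, deferring to \cite[Theorem 4.2]{mayekar2020ratq}, and your conditioning-on-$\hat g_t$ factorization with the almost-sure unit-norm bound on the shape input is exactly that argument. The one point you rightly flag --- bounding the shape factor uniformly before spending the second-moment budget on the gain --- is handled correctly, so nothing further is needed.
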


\paragraph{4. Analysis.} Since communicating gain and shape  for each query requires 2 channel-uses, we can have atmost $N/2$ queries.  For the minimum-distance decoder,  denote by $A_N$ the event where all the ASK constellation points
sent in $N$ channel-uses
are decoded correctly by the algorithm and by $A_N^c$ its complement, i.e.,
$A_N^c := \cup_{t=1}^{N} \{|Z_{s,t}| \geq 2\sqrt{P}/(2^r-1)\}$,
where $Z_{s,t}$ is defined in \eqref{e:Gauss2}.
We have
\begin{align*}
\E{f(x_N)-f(x^\ast )}&=\E{(f(x_N)-f(x^\ast))\mid A_{N}}\cdot \bPr{A_{N}}+\E{(f(x_T)-f(x^\ast))\mid A_{N}^c}\cdot \bPr{A_{N}^c}\\
&\leq \E{(f(x_N)-f(x^\ast))\mid A_{N}}+DB\cdot \bPr{A_{N}^c}. \numberthis \label{e:ubs}
\end{align*}
As $\bPr{A_N^c} \leq N\exp\left( -\frac{2\SNR}{(2^r-1)^2}\right)$,  setting $r=\log\left( \sqrt{\frac{4\SNR}{\ln N}}+1\right)$ gives $\bPr{A_N^c}\leq \frac{1}{\sqrt{N}}.$ 
Using Lemma \ref{thm1}, the first term on the right-side can be bounded as 
\begin{align*}
\E{(f(x_N)-f(x^\ast))\mid A_{N}}&\leq D\left( \frac{\alpha(\pi, \varphi, \psi)}{\sqrt{N/2}}+\beta(\pi, \varphi, \psi)\right).
\end{align*}

We now analyse the overall performance measures $\alpha(\pi, \varphi, \psi)$ and $\beta(\pi, \varphi, \psi)$ of gain-shape quantizer described above.

{ Recall that the gain value is sent in one channel-use after appropriate scaling.}  The shape is quantized  using RATQ and sent over the channel using the ASK code given by \eqref{e:ASK} with $r=\mu d \log(h_s k_s)$.

Using the individual performance measures from \eqref{e:perf1} and Lemma \ref{l:shape_perf},  and combining them via Lemma \ref{l:combined_perf}, we have
\begin{align*}
D\left( \frac{\alpha(\pi, \varphi, \psi)}{\sqrt{N/2}}+\beta(\pi, \varphi, \psi)\right) &\leq D\left( \frac{B\sqrt{1+\frac{1}{\SNR}}\cdot \sqrt{\frac{1}{\mu}\left(\frac{9}{(k_s-1)^2}+1 \right)}}{\sqrt{N/2}}\right)\\
&=DB\sqrt{\frac{d}{N}}\sqrt{\frac{2}{\mu d}\left(1+\frac{1}{\SNR}\right)\left(\frac{9}{(k_s-1)^2}+1\right)}\\
&=DB\sqrt{\frac{d}{Nr}}\sqrt{2\log (h_sk_s)\left(1+\frac{1}{\SNR}\right)\left(\frac{9}{(k_s-1)^2}+1\right)},
\end{align*}
where the last line uses the fact that $\mu d=\lceil r/\log (h_s k_s)\rceil.$ Using the inequality above,  \eqref{e:ubs} can be further bounded as
 \begin{align*}
\E{f(x_N)-f(x^\ast )}&\leq {\frac{DB\sqrt{d}}{\sqrt{N}}\sqrt{\frac{2\log (h_sk_s)\left(1+\frac{1}{\SNR}\right)\left(\frac{9}{(k_s-1)^2}+1\right)}{\log(1+\sqrt{4\SNR/\ln N})}} +\frac{DB}{\sqrt{N}}.}
\end{align*}
At last, we use an 8-level shape quantizer for every coordinate, i.e., $k_s=8$, and choose the number of quantization intervals $h_s$   satisfying $\log h_s=\lceil[\log (1+\ln^\ast(d/3))\rceil$.  For $\ln^\ast (d/3) \leq 7,$ {we have $\mu d=\lceil r/6\rceil$, which further implies that $r\geq 6$, and that}
\begin{align*}
\E{f(x_N)-f(x^\ast )}
&\leq \frac{DB}{\sqrt{N}}\sqrt{\frac{d \cdot 24\left(1+\frac{1}{\SNR}\right)}{{\log(1+\sqrt{4\SNR/\ln N})}}} +\frac{DB}{\sqrt{N}}\\
&\leq \frac{2DB}{\sqrt{N}} \cdot \sqrt{\frac{d}{\min\lbrace d, \frac{{\log(1+\sqrt{4\SNR/\ln N})}}{24\left(1+\frac{1}{\SNR}\right)}\rbrace}}\\
&{\leq \frac{2DB}{\sqrt{N}} \cdot \sqrt{\frac{d}{\min\lbrace d, \frac{{\log(1+\sqrt{4\SNR/\ln N})}}{48}\rbrace}}},
\end{align*}
where the last line uses the inequality that $1+\frac{1}{\SNR}\leq 2$ for $\SNR>1$, which further holds since
$r\geq 6$.
\qed

\section{Experiments}\label{s:exp}
We evaluate the performance of our proposed analog and digital schemes ($c.f.$ Sections \ref{s:scheme1},  \ref{p:ASK_1}) which achieve the optimality of over-the-air optimization at low and high $\SNR$s, respectively.  Our experiments validate all our claims and are described below.

We consider the task of image classification and perform experiments on MNIST dataset, which has 60000 training and 10000 test samples.  In particular, the classifier for the MNIST dataset is implemented by training a 3-layer {\em Convolutional Neural Network} (CNN) that consists of a single convolution layer with 16 filters of dimension $3\times 3$ each and ReLU activation function,  followed by a $2\times 2$ max-pooling; one fully connected layer with dimensions $2704\times 10$; and a final softmax output layer, i.e., $d=27210$.  We choose the optimization algorithm to be SGD with learning rates proportional to $\SNR$s ({{as can be inferred from Lemma \ref{thm1}}}). 

{For our experiments,  we consider the proposed digital scheme using ASK described in Section \ref{p:ASK_1}.  Recall that the gain is always sent in one channel-use after scaling, and the shape is quantized using RATQ and then sent over the Gaussian channel using the ASK code. } For RATQ,  we set $h_s=4, k_s=8$ in tetra-iterated adaptive shape quantizer.  Further,  the descriptions of quantization interval ($\log h_s$ bits per dimension) and the corresponding uniform quantization point ($\log k_s$ bits per dimension) are sent separately\footnote{Note that our proposed digital scheme (see  Section \ref{p:ASK_1} for details) uses one channel transmission for sending the shape gradient quantization. Still, in the experiments, we send it using two channel transmissions.  We do this to mitigate the precision issues we run into for ASK coding at high values of $r$ in Python.} in two different channel uses, and the best values for $r$ in ASK code are chosen proportional to operating $\SNR$.  

On the other hand, for the proposed analog scheme,  we consider the sampled version of scaled transmission scheme described in Section \ref{s:panlgschm} with sampling only three coordinates, i.e., $\ell=3$. This choice of $\ell$ is considered for a fair performance comparison with the digital scheme in terms of the number of channel uses. Our codes are available online~\cite{OTAlink} on GitHub. 

{
We investigate the performance of the proposed analog and digital over-the-air schemes at various $\SNR$s{{; specifically,}} $-30$dB,  40dB, 100dB and 180dB.  We plot the training loss and test accuracy for the image classification task at these $\SNR$s in Figures~\ref{fig1},~\ref{fig2},~\ref{fig3} and~\ref{fig4}, respectively. {{The choice for these $\SNR$s are for illustrating the validity of theoretical claims, not for practical considerations. }}

The performance of both the analog and the digital scheme improves as $\SNR$ increases. However, {{the improvement  is faster for the digital scheme than for the analog scheme}}.  In more detail, Figure~\ref{fig1} shows that the proposed analog scheme performs much better than the proposed digital scheme at very low $\SNR$ of $-30$dB. 
As we increase the $\SNR$, the performance gap between the digital and analog schemes gets reduced.  
This can be observed in Figure~\ref{fig2} where the performance of both the schemes at $\SNR=40$dB is similar. 
With further increase in $\SNR$ values, the proposed digital scheme surpasses the performance of the proposed analog scheme, with the gap between their performance widening with an increase in $\SNR$, as can be seen in Figure~\ref{fig3} and Figure~\ref{fig4}.

Figure~\ref{fig4} also shows the performance of the classic {\em baseline} scheme, where perfect stochastic gradient estimates are available for the optimization protocol. In other words, the gradients are passed through a Gaussian channel of zero variance.
 As observed in Figure~\ref{fig4}, the proposed digital scheme is close to the baseline\footnote{In an ideal scenario, we expect the digital scheme to attain the baseline performance for a larger $\SNR$ value, as we increase the value of $r$ accordingly. Unfortunately, our python code runs into precision issues for optimally tuned ASK schemes at higher $\SNR$s (beyond $180$ dB).
} performance. 

Thus our experiments validate our theory. In particular, Figure~\ref{fig1} validates our theoretical claim that analog schemes are optimal at low $\SNR$. On the other hand, Figures~\ref{fig2},~\ref{fig3},~\ref{fig4} validates our theoretical claim that analog schemes go further away from optimality with an increase in $\SNR$ and digital schemes need to be used for optimal convergence at high $\SNR$.

\setcounter{figure}{0}   
  \pgfplotsset{compat=1.12, every axis/.append style={
  		line width=1pt, tick style={line width=0.6pt}}, width=7.6cm,height=5.5cm, tick label style={font=\small},
  	label style={font=\small},
  	legend style={font=\tiny}}

\begin{center}
\begin{figure}[h]
\begin{tikzpicture}
\begin{axis}[legend style={font=\footnotesize},
 	xlabel= Number of channel uses,
 	ylabel= Test accuracy,
    legend pos=north east,
 	xmin=0,
 	xmax=5900,
 	ymin=0,
 	ymax=1, grid=both, grid style={dotted, gray}
 	]
\addplot +[mark=triangle, color=ta2orange]table[x expr=351+351*\coordindex, y expr=\thisrowno{0}] {accrAnalogRCD.dat};
\addplot +[mark=star,color=cyan!60!taskyblue] table[x expr=351+351*\coordindex, y expr=\thisrowno{0}] {accrQlowSNR.dat};
\legend{ Proposed analog\\ Proposed digital\\}
\end{axis}
\end{tikzpicture}
\hspace{0.02cm}
\begin{tikzpicture}
\begin{axis}[legend style={font=\footnotesize},
 	xlabel= Number of channel uses,
 	ylabel= Training loss,
    legend pos=north east,
 	xmin=0,
 	xmax=5900,
 	ymin=1.5,
 	ymax=2.51, grid=both, grid style={dotted, gray}
 	]
\addplot +[mark=none, color=ta2orange]table[x expr=3*\coordindex, y expr=\thisrowno{0}] {trainlossAnalogRCD.dat};
\addplot +[mark=none,color=cyan!60!taskyblue] table[x expr=3*\coordindex, y expr=\thisrowno{0}] {trainlossQlowSNR.dat};

\legend{ Proposed analog\\ Proposed digital\\}
\end{axis}
\end{tikzpicture}
\caption{ Comparison between proposed analog and proposed digital scheme at $\SNR=-30$dB.}	
\label{fig1}
\end{figure}
\begin{figure}[h]
\begin{tikzpicture}
\begin{axis}[ 
legend style={font=\footnotesize},
 	xlabel= Number of channel uses,
 	ylabel= Test accuracy,
    legend pos=south east,
 	xmin=0,
 	xmax=5900,
 	ymin=0.5,
 	ymax=0.8, grid=both, grid style={dotted, gray}
 	]
 \addplot +[mark=triangle, color=ta2orange]table[x expr=351+351*\coordindex, y expr=\thisrowno{0}] {accrAnalogRotRCDSNR1e04.dat};
   \addplot +[mark=star, color=cyan!60!taskyblue] table[x expr=351+351*\coordindex, y expr=\thisrowno{0}] {accrSNRe04.dat};
\legend{ Proposed analog\\ Proposed digital\\}
\end{axis}
\end{tikzpicture}
\hspace{0.05cm}
\begin{tikzpicture}
\begin{axis}[
legend style={font=\footnotesize},
 	xlabel= Number of channel uses,
 	ylabel= Training loss,
    legend pos=north east,
 	xmin=0,
 	xmax=5900,
 	ymin=0.6,
 	ymax=2.51, grid=both, grid style={dotted, gray}
 	]
\addplot +[mark=none, color=ta2orange]table[x expr=3*\coordindex, y expr=\thisrowno{0}] {trainlossAnalogRotRCDSNR1e04.dat};
\addplot +[mark=none, color=cyan!60!taskyblue] table[x expr=3*\coordindex,y expr=\thisrowno{0}] {trainlossSNRe04.dat};
\legend{ Proposed analog\\ Proposed digital\\}
\end{axis}
\end{tikzpicture}
\caption{ Comparison between proposed analog and proposed digital scheme at $\SNR=40$dB.}		
\label{fig2}
\end{figure}
\begin{figure}[h]
 	\begin{tikzpicture}
 	\begin{axis}[legend style={font=\footnotesize},
 	xlabel= Number of channel uses,
 	ylabel= Test accuracy,
    legend pos=south east,
 	xmin=0,
 	xmax=5900,
 	ymin=0.6,
 	ymax=0.9,grid=both, grid style={dotted, gray} 
 	]

   \addplot +[color=cyan!60!taskyblue,mark=star] table[x expr=351+351*\coordindex, y expr=\thisrowno{0}] {accrSNRe10.dat};
 \addplot +[mark=triangle, color=ta2orange]table[x expr=351+351*\coordindex, y expr=\thisrowno{0}] {accrAnalogRotRCDSNR1e10.dat};

\legend{ Proposed digital\\ Proposed analog\\}
  \end{axis}
 	\end{tikzpicture}
\hspace{0.05cm}
 \begin{tikzpicture}
\begin{axis}[
legend style={font=\footnotesize},
 	xlabel= Number of channel uses,
 	ylabel= Training loss,
    legend pos=north east,
 	xmin=0,
 	xmax=5900,
 	ymin=0.6,
 	ymax=2.7, grid=both, grid style={dotted, gray}
 	]
\addplot +[mark=none, color=cyan!60!taskyblue] table[x expr=3*\coordindex, y expr=\thisrowno{0}] {trainlossSNRe10.dat};
\addplot +[mark=none, color=ta2orange]  table[x expr=3*\coordindex, y expr=\thisrowno{0}] {trainlossAnalogRotRCDSNR1e10.dat};

\legend{ Proposed digital\\ Proposed analog\\}
\end{axis}
\end{tikzpicture}
\caption{ Comparison between proposed analog and proposed digital scheme at $\SNR=100$dB.}	
\label{fig3}
\end{figure}
\begin{figure}[h]
\begin{tikzpicture}
\begin{axis}[legend style={font=\footnotesize},
 	xlabel= Number of channel uses,
 	ylabel= Test accuracy,
    legend pos=south east,
 	xmin=0,
 	xmax=5900,
 	ymin=0.5,
 	ymax=1, grid=both, grid style={dotted, gray}
 	]
 	\addplot +[color=blue, mark=none] table[x expr=117+117*\coordindex, y expr=\thisrowno{0}] {accrBL.dat};
\addplot +[mark=triangle, color=ta2orange]table[x expr=351+351*\coordindex, y expr=\thisrowno{0}] {accrAnalogRotRCDhighSNR.dat};
\addplot +[mark=star, color=cyan!60!taskyblue]table[x expr=351+351*\coordindex, y expr=\thisrowno{0}] {accrSNRe18.dat};
\legend{ Baseline\\ Proposed analog\\ Proposed digital\\}
\end{axis}
\end{tikzpicture}
\hspace{0.05cm}
\begin{tikzpicture}
\begin{axis}[legend style={font=\footnotesize},
 	xlabel= Number of channel uses,
 	ylabel= Training loss,
    legend pos=north east,
 	xmin=0,
 	xmax=5900,
 	ymin=0.2,
 	ymax=3, grid=both, grid style={dotted, gray}
 	]
\addplot +[mark=none, color=blue]table[x expr=\coordindex, y expr=\thisrowno{0}] {trainlossbl.dat};
\addplot +[mark=none, color=ta2orange]table[x expr=3*\coordindex, y expr=\thisrowno{0}] {trainlossAnalogRotRCDhighSNR.dat};
\addplot +[mark=none,color=cyan!60!taskyblue] table[x expr=3*\coordindex, y expr=\thisrowno{0}] {trainlossSNRe18.dat};

\legend{ Baseline\\ Proposed analog\\ Proposed digital\\}
\end{axis}
\end{tikzpicture}
\caption{ Comparison between proposed analog and proposed digital scheme at $\SNR=180$dB.}	
\label{fig4}
\end{figure}
\end{center}

\section{Concluding remarks}\label{s:conclusion}
We showed the optimality of analog schemes at low $\SNR$ in Corollary~\ref{c:AnOpt}. However, Theorem~\ref{t:ALB} shows that there is a $\sqrt{d}$ factor bottleneck that analog codes can't overcome, no matter how high the $\SNR$ is. Finally, we show in Theorem~\ref{t:ASK_1} that the proposed digital scheme using ASK codes almost attain the optimal convergence rate at all $\SNR$s.

It is important to note that more sophisticated coding schemes can still help in improving the small $\log\log N$ and $\ln^* d$ factors seen in the performance of ASK codes. 

In another direction, it is important to consider multiparty algorithms and multiterminal communication over Gaussian additive MAC channel. While the limitations for analog schemes apply to that setting as well, we may need to use lattice codes to extend our ASK coding scheme to a MAC. This is an interesting direction for future work.
\bibliography{tit2018}
\bibliographystyle{IEEEtranS}
\appendix
\section{Mathematical details concerning Remark \ref{r:topk}}\label{ap:TopK}
Recall that in the top-$k$  gradient coding scheme only the absolute largest $k$ values of the gradients are used to update the query point. We begin by defining a strict generalization of top-$k$ gradient coding schemes which we call $k$-coordinate sampling codes.

\begin{defn}\label{d:analog_schemes}
A code is a $k$-coordinate sampling code if the encoder mapping $\varphi$ consist of only $k$-coordinate values and their indices, i.e., when $\varphi(x)= \left( S ,  \{x(i)\}_{i \in S}\right)$, where $S$ is a subset of $[d]$ with cardinality $k$.
 Further, we allow for the set $S$ to be dependent on $x$. Also, we denote by $\cE^\ast_{{\tt k}cs}(N)$ the min-max optimization error when the class of $(d, \ell, P)$-encoding protocol is restricted to analog schemes (with everything else remaining the same as in \eqref{e:minmax}). Clearly,  $\cE^\ast_{{\tt k}cs}(N) \geq \cE^\ast(N).$ 
\end{defn}
\begin{lem} For all values of $\SNR,$ we have
\[ \cE^\ast_{{\tt k}cs}(N) \geq  \frac{cDB}{\sqrt{N}}\cdot \sqrt{\frac{d}{\min\{d, k\log \frac{d}{k}\}}} \]
\end{lem}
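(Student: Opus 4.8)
The plan is to run the same machinery as in the proof of Theorem~\ref{t:LB}. I would keep the hard instances $\{f_v\}_{v\in\{-1,1\}^d}$ from \eqref{eq:def:gv:convex:p2} together with the product-Bernoulli gradient oracle $O_v$ (each coordinate of $\hat g_t$ equal to $\pm B/\sqrt d$ with bias $2\delta v(i)$), so that \eqref{e:C_P2} and \eqref{e:minmax_sub-optimality} (now stated for $\cE^\ast_{{\tt k}cs}$, which is legitimate since a $k$-coordinate sampling code is in particular an over-the-air code) reduce the problem to upper bounding $\sum_{i\in[d]}\mutualinfo{V(i)}{Y^T}$ for an arbitrary $k$-coordinate sampling code $\varphi$. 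Exactly as in Theorem~\ref{t:LB}, I would then invoke the strong data-processing inequality of \cite{acharya2020general} for product-Bernoulli observations, $\sum_{i\in[d]}\mutualinfo{V(i)}{Y^T}\le c'\delta^2\max_v\max_\varphi\mutualinfo{\hat g^T}{Y^T}$, so everything comes down to bounding $\mutualinfo{\hat g^T}{Y^T}$ for such codes.

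The one genuinely new ingredient, and the step I expect to be the crux, is the observation that \emph{against this oracle} a $k$-coordinate sampling code can convey only $O\!\left(k\log(d/k)\right)$ bits per query \emph{no matter how large $\SNR$ is}. Indeed, after the $t$-th query the transmitted codeword is $C_t=\varphi(\hat g_t)=(S_t,\{\hat g_t(i)\}_{i\in S_t})$, and conditioned on the shared randomness and the past channel outputs $Y^{t-1}$, $C_t$ depends on $\hat g_t$ only through the pair $\big(S_t,\{\hat g_t(i)\}_{i\in S_t}\big)$; since $S_t$ ranges over $\binom dk$ subsets and each sampled coordinate lies in $\{\pm B/\sqrt d\}$, this pair takes at most $\binom dk 2^k$ values, whence $H(C_t\mid Y^{t-1},\text{sh.\ rand.})\le \log\binom dk + k$. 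Combining the data-processing inequality $\hat g_t - C_t - Y_t$ (conditioned on $Y^{t-1}$ and the shared randomness) with the chain rule as in Theorem~\ref{t:LB}, and accounting for the possibly adaptive dependence of the code on $Y^{t-1}$ exactly as in \cite{ACMT21informationconstrained}, gives $\mutualinfo{\hat g^T}{Y^T}\le T\big(\log\binom dk+k\big)$; together with the trivial bounds $\mutualinfo{\hat g_t}{Y_t}\le H(\hat g_t)=d$ and $T\le N$ (as $\ell\ge1$) this yields $\mutualinfo{\hat g^T}{Y^T}\le N\min\{d,\ \log\binom dk+k\}$, and hence $\sum_{i\in[d]}\mutualinfo{V(i)}{Y^T}\le c'\delta^2 N\min\{d,\ \log\binom dk+k\}$. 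Note there is no $\SNR$-dependent term here: even over a noiseless channel the structure of the code caps the per-query information, which is precisely why the slowdown survives $\SNR\to\infty$.

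To finish, I would use $\log\binom dk\le k\log(ed/k)$, so that (restricting to $k\le d/2$, the regime of interest, where $\log(d/k)\ge1$) $\log\binom dk+k\le 4k\log(d/k)$ and therefore $\min\{d,\log\binom dk+k\}\le 4\min\{d,k\log(d/k)\}$. Plugging $\sum_{i\in[d]}\mutualinfo{V(i)}{Y^T}\le 4c'\delta^2 N\min\{d,k\log(d/k)\}$ into \eqref{e:C_P2} and maximizing the right-hand side over $\delta$ — choosing $\delta=\sqrt{d/(32c'N\min\{d,k\log(d/k)\})}$, which is valid (i.e.\ $\delta\le1/2$, so that the oracle $O_v$ is well defined) once $N\gtrsim d/\min\{d,k\log(d/k)\}$ — exactly as in the proof of Theorem~\ref{t:LB}, together with \eqref{e:minmax_sub-optimality} gives $\cE^\ast_{{\tt k}cs}(N)\ge \frac{cDB}{\sqrt N}\sqrt{d/\min\{d,k\log(d/k)\}}$ for a universal constant $c$. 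The only delicate points are bookkeeping: keeping the conditioning on the shared randomness and past outputs correct in the bound on $H(C_t\mid\cdot)$, and the standard handling of adaptivity; all of the conceptual content sits in the $\SNR$-free per-query estimate $\log\binom dk+k$.
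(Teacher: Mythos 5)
Your proposal is correct and follows essentially the same route as the paper: the same hard functions and product-Bernoulli oracle, the same key observation that a $k$-coordinate sampling code's output takes at most $\binom{d}{k}2^k$ values against this oracle (hence an $\SNR$-independent cap of $\log\binom{d}{k}+k$ bits per query), and the same strong data-processing step and optimization over $\delta$. The only difference is that you spell out the entropy-counting argument that the paper delegates to a citation of \cite[Theorem 5]{ACMT21informationconstrained}, which is a harmless (indeed helpful) elaboration.
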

\begin{proof}
For bounding  $ \cE^\ast_{{\tt k}cs}(N)$, our function class remains the same as in \eqref{eq:def:gv:convex:p2} and the oracle remain the same as in the proof of Theorem \ref{t:LB}. Now note that since gradient estimates supplied by the oracle are Bernoulli vectors, the encoder $\varphi(\cdot)$ can thought of as quantizer with precision of $\log{d \choose k} +k$ bits, where the first term in the addition is used to represent $S$ and the second to represent $k$. Therefore, even at infinite $\SNR$, we have
\[\sum_{i \in [d]}\mutualinfo{V(i)}{Y^T}\leq  c^{\prime}\delta^2 N \min \{d, \log{d \choose k }+k\}, \]
where the result directly follows from \cite[Theorem 5]{ACMT21informationconstrained}
Then, by noting that $\log{d \choose k} +k \leq k 
\log \frac{d}{k}+ k(1+\log e)$ and proceeding as in proof of Theorem \ref{t:LB}, the proof is complete.
\end{proof}

Thus, if we employ top-$k$ gradient coding schemes, even at very high $\SNR$ values we do not attain the classic convergence rate.
\end{document}